\documentclass[lettersize,journal]{IEEEtran}
\usepackage{amsmath,amsfonts}
\usepackage{array}
\usepackage[caption=false,font=normalsize,labelfont=sf,textfont=sf]{subfig}
\usepackage{textcomp}
\usepackage{stfloats}
\usepackage{url}
\usepackage{verbatim}
\usepackage{graphicx}
\usepackage{cite}
\usepackage{orcidlink}
\usepackage{enumitem}
\usepackage{algorithm}
\usepackage{algpseudocode}  
\usepackage{caption}
\usepackage{subfig}
\usepackage{amssymb}    
\usepackage{amsthm}     
\usepackage{hyperref}  
\usepackage[T1]{fontenc}

\usepackage{booktabs}  
\usepackage{bbm}        

\algrenewcommand\algorithmicrequire{\textbf{Input:}}
\algrenewcommand\algorithmicensure{\textbf{Output:}}

\def\BibTeX{{\rm B\kern-.05em{\sc i\kern-.025em b}\kern-.08em
		T\kern-.1667em\lower.7ex\hbox{E}\kern-.125emX}}
\usepackage{balance}
\makeatletter
\newcommand*{\rom}[1]{\expandafter\@slowromancap\romannumeral #1@}
\makeatother
\providecommand{\triangleq}{\stackrel{\triangle}{=}}

\newtheorem{theorem}{Theorem}

\newtheorem{proposition}{Proposition}

\theoremstyle{remark}

\begin{document}
	\title{Adaptive Distributed Physical-Layer Authentication and Attack Detection in 6G Non-Terrestrial Networks via Causal Meta-Learning}
	\author{Parsa~Rajabi~\orcidlink{0009-0005-4645-9444},
		Mohammad~Reza~Abedi~\orcidlink{0000-0003-4114-1339}~\IEEEmembership{Student~Member,~IEEE},
		Nader~Mokari~\orcidlink{0000-0001-5364-8888}~\IEEEmembership{Senior~Member,~IEEE},
		Paeiz~Azmi~\orcidlink{0000-0001-9736-3462}~\IEEEmembership{Senior~Member,~IEEE},
		Halim~Yanikomeroglu~\orcidlink{0000-0003-4776-9354}~\IEEEmembership{Fellow,~IEEE}
		\thanks{This work has been submitted to the IEEE for possible publication.
			Copyright may be transferred without notice, after which this version may
			no longer be accessible.}
		\thanks{P. Rajabi, MR. Abedi, N. Mokari, and P. Azmi are with the Department of Electrical and Computer Engineering, Tarbiat Modares University, Tehran, 14115-111, Iran (e-mail: parsa\_rajabi@modares.ac.ir; Mohammadreza\_abedi@modares.ac.ir; nader.mokari@modares.ac.ir; pazmi@modares.ac.ir).}
		\thanks{H. Yanikomeroglu is with the Department of Systems and Computer Engineering, Carleton University, Ottawa, K1S 5B6, Canada (e-mail: halim@sce.carleton.ca).}
	}

	\maketitle
	
\begin{abstract}
	Physical-layer authentication (PLA) in non-terrestrial networks (NTNs) is challenged by severe Doppler shifts, long delays, and fast channel variations, which cause distribution shifts and degrade conventional learning methods. Existing PLA schemes often rely on single features or generalize poorly to unseen environments. This paper proposes a secure adaptive framework for authentication in multi-zone networks (SAFA-MZ), a causal meta-learning framework for distributed PLA (DPLA) in NTNs. First, we design a multi-feature fingerprint that combines spatial, angular, combiner, subspace, and Doppler-delay features. The fingerprint is adaptive and distributed, as it fuses heterogeneous physical-layer features and measurements from multiple aerial nodes. Second, we formulate a structural causal model (SCM) to capture the relations among design choices, environmental factors, extracted features, and authentication outcomes. Third, we develop a model-agnostic meta-learning (MAML) strategy with invariant risk minimization (IRM) and causal consistency regularization for fast adaptation to unseen NTN environments with few labeled samples. Fourth, we propose a two-stage authentication scheme that performs local recognition and activates time-difference-of-arrival (TDOA) localization with a graph attention (GAT) network only when needed, which reduces backhaul overhead. Simulations show that SAFA-MZ achieves $92\%$ accuracy and $96\%$ AUC, outperforming centralized deep learning and single-feature baselines across diverse environments.
\end{abstract}
	
	\begin{IEEEkeywords}
		Distributed physical-layer authentication (DPLA), non-terrestrial networks (NTN), causal inference, meta-learning, MIMO, graph attention network.
	\end{IEEEkeywords}

\section{Introduction}
\label{sec:Introduction}

\IEEEPARstart{P}{hysical}-layer authentication (PLA) verifies transmitter identity using channel characteristics and hardware impairments. Machine learning methods have been widely used for PLA \cite{ref53}, including channel prediction-based, classification-based, and similarity-based schemes \cite{ref46}. Compared with threshold-based hypothesis testing \cite{ref82, ref79, Chang2025PTPLA, ref88}, artificial intelligence (AI)-based authentication avoids fixed thresholds and improves robustness in dynamic environments \cite{ref81, ref51}.

Existing PLA methods rely on single features, tag-embedding \cite{Chang2025PTPLA}, or hybrid schemes \cite{HybridPLA}. In contrast, this paper uses multiple features for dynamic non-terrestrial networks (NTNs). Security in NTN is challenged by time-varying channels and hardware impairments \cite{Demir2025e}, which degrade authentication.
To address spoofing in such settings, we propose a multi-feature distributed PLA (DPLA) framework.
DPLA architectures, including semi-distributed, decision-level fusion, and fully distributed designs, offer different tradeoffs among accuracy, overhead, and robustness \cite{Zhang2024DPLA_overview}. Prior works have studied DPLA for millimeter-wave multiple-input multiple-output (mmWave MIMO) systems using beam-pattern deviations \cite{Zhang2025DistributedPLA}, dynamic weighted voting in smart grids \cite{Li2025DistributedPLA}, and missed-detection bounds for ultra-reliable low-latency communications (URLLC) \cite{Forssell2022}.

Deep learning-based fingerprint identification can be vulnerable to adversarial perturbations \cite{RFFIAdversarial} and cross-receiver distribution shifts caused by heterogeneous receiver hardware \cite{SCRFFI_paper}.
To improve robustness under distribution shifts and adversarial variations, we combine causal learning and meta-learning.
Causal learning extracts invariant relations and reduces spurious correlations, while meta-learning enables fast adaptation to unseen environments with few labels.
Related studies have used structural causal models (SCMs) for radio-frequency (RF) fingerprint purification \cite{Tang2025}, invariant graph learning \cite{Job2025}, counterfactual fairness \cite{CausalFairness2022}, reinforcement learning (RL) \cite{Roy2025, CausalMAPO, serra2026causal}, 6G explainability \cite{Arana2025}, wireless sustainability analysis \cite{Mata2025}, and semantic communication \cite{Thomas2025}. Meta-learning has also shown effectiveness in internet of things (IoT)-based multi-user PLA \cite{Zhao2026TMML} and low-Earth orbit (LEO) satellites \cite{DMMAFL}, and sparse fingerprint-based localization using graph-based spatial aggregation \cite{AGML_paper}.

Pilot contamination in multi-user uplinks degrades channel estimates and physical features when users share pilots. Similar problems appear in cell-free massive MIMO \cite{ref87}. Hence, we use a normalized minimum mean-square error (MMSE) combiner, interference-aware features, and causal meta-learning. Related attacks on channel state information (CSI)-based PLA are studied in \cite{zeng2026intelligent}, while pilot contamination in mmWave grant-free massive machine-type communications is addressed in \cite{wang2025joint}.

This paper proposes a secure adaptive framework for authentication in multi-zone networks (SAFA-MZ). The framework combines SCM, backdoor adjustment, invariant risk minimization (IRM), and model-agnostic meta-learning (MAML) to learn environment-invariant authentication features and adapt rapidly to unseen NTN conditions. It also estimates the conditional average treatment effects (CATEs) of key design variables. For efficiency, local multi-feature fingerprint recognition is performed first, and time-difference-of-arrival (TDOA)-based distributed verification is activated only when local authentication is unreliable. In the simulations, SAFA-MZ is compared with representative baselines, including a node-level deep learning method without causal meta-learning or distributed verification, and single-feature authentication schemes that rely on individual physical-layer features and do not employ distributed verification. Finally, physically grounded causal interpretations are provided to explain the influence of environmental factors and design variables on the authentication performance.

The contributions of this paper are summarized as follows:

\begin{itemize}
	
	\item We propose a dual-mode DPLA framework for NTN that uses local multi-feature authentication and activates TDOA-based attack localization only when needed. This reduces overhead and improves robustness and security.
	
	\item We develop a causal-invariant fingerprinting method that uses multiple features under pilot contamination, including directional similarity, beam concentration, combiner consistency, subspace deviation, Doppler-delay characteristics, SINR, and interference-to-noise ratio (INR). This improves robustness against environmental changes and spurious correlations in dynamic NTN channels.
	
	\item We integrate causal representation learning and meta-learning for fast adaptation to new environments with minimal samples. Additionally, we propose a graph-based scheme for coordinated attack detection among aerial nodes and provide causal interpretations for authentication performance.
	
\end{itemize}

\section*{Notation}
We use bold lowercase letters (e.g., \(\mathbf{y}\)) for vectors. 
The Euclidean norm is \(\|\cdot\|\), and the complex modulus is \(|\cdot|\). 
Expectation and probability are denoted by \(\mathbb{E}[\cdot]\) and \(\mathbb{P}(\cdot)\), respectively.
$\mathbf{1}\{\cdot\}$ is the indicator function. 
The symbol \(\circ\) denotes the Hadamard product.
The \(M \times M\) unitary discrete Fourier transform (DFT) matrix is \(\mathbf{F}_M\). 
It transforms the signal from the antenna domain to the angular (beam) domain. 

\section{System Model}
\label{sec:system_model}

As illustrated in Fig.~\ref{fig:system-model}, we consider a set \(\mathcal{K} \triangleq \{k_1, k_2, \dots, k_{|\mathcal{K}|}\}\) of aerial nodes (e.g., high-altitude platform stations (HAPS), LEO, or geostationary orbit (GEO) satellites). Each node \(k\) is equipped with \(M_k\) antennas. The nodes are spatially distributed and connected via a backhaul link to a central fusion center (FC). A set \(\mathcal{U} \triangleq \{u_1, u_2, \dots, u_{|\mathcal{U}|}\}\) of single-antenna users, including legitimate users and illegitimate transmitters (attackers), transmits uplink signals. For a given user \(u\), the subset \(\mathcal{K}_u \subseteq \mathcal{K}\) denotes the nodes that can receive its signal (i.e., the serving nodes). Due to the relative motion between the nodes and the users, the channel experiences severe Doppler shifts and long propagation delays, making DPLA challenging.
Therefore, we consider a two-phase authentication framework. In the local authentication phase, legitimate users are registered at the FC using their physical-layer signatures. During uplink transmission, each observation (either legitimate user or attacker) is locally compared with the enrolled profiles. Matched transmissions are accepted, while suspicious ones are forwarded to the distributed verification phase, where multiple nodes cooperate and the FC makes the final decision.

\begin{figure}[t]
	\centering
	\includegraphics[width=0.98\columnwidth]{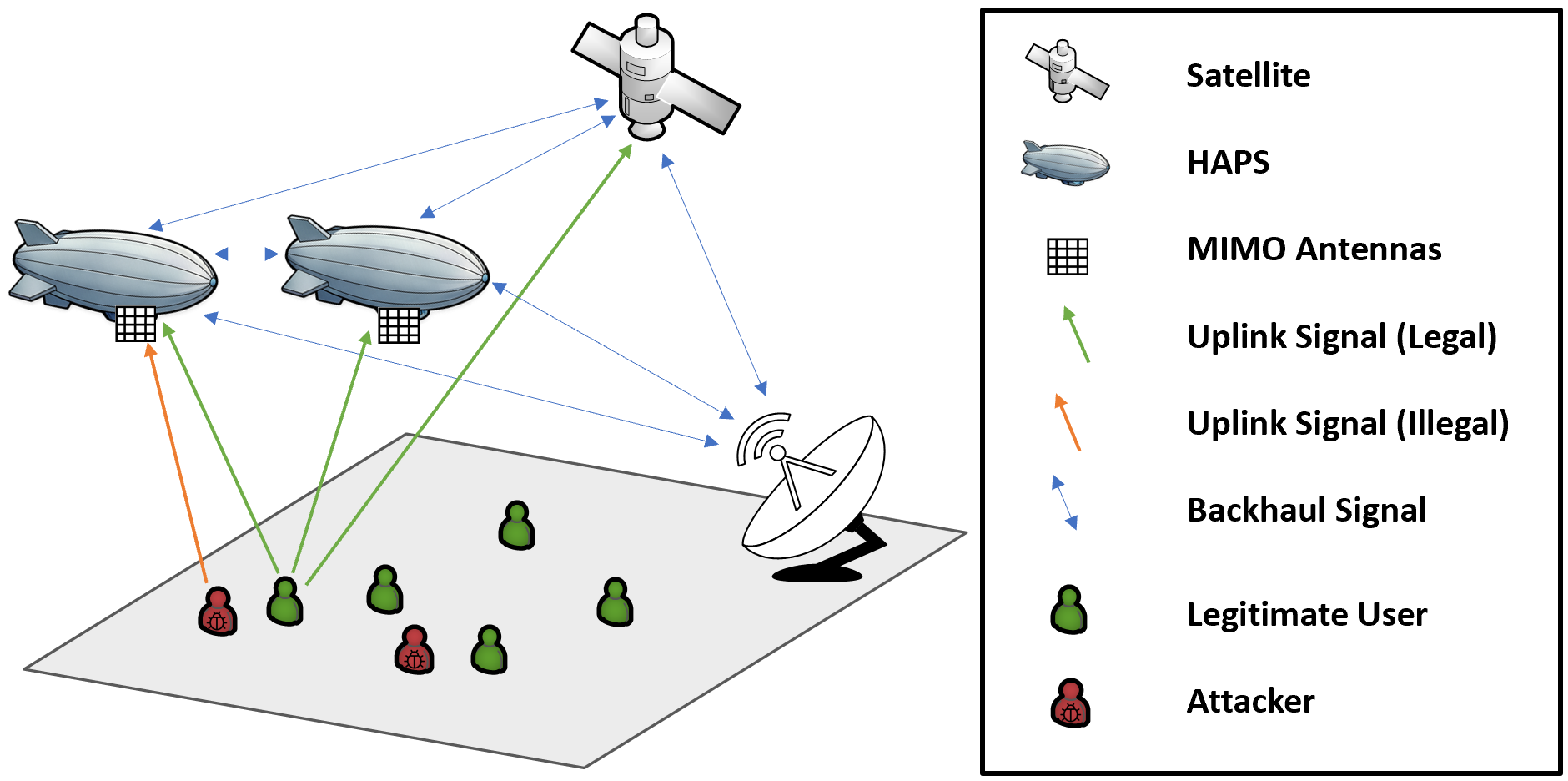}
	\caption{System model with multiple HAPSs/satellites and a fusion center.}
	\label{fig:system-model}
\end{figure}

\subsection{Uplink Signal Model and Feature Extraction}
\label{subsec:signal_model}

The received uplink snapshot at node $k$ over block $n$ is denoted by $\mathbf{y}_{k}(n)\in \mathbb{C}^{M_k}$ and modeled using the following explicit multi-user representation for any target user \(u\):
\begin{align}
	\mathbf{y}_{k}(n)
	=
	&\sum_{u \in \mathcal{U}_k(n)}
	\sqrt{p_{u}(n)}\,\mathbf{h}_{k,u}(n)\,x_{u}(n)
	+
	\mathbf{n}_k(n) \nonumber \\
	=
	&\sqrt{p_{u}(n)}\,\mathbf{h}_{k,u}(n)\,x_{u}(n) \nonumber \\ 
	&+\sum_{\substack{u'\in\mathcal{U}_k(n)\\u'\neq u}}
	\sqrt{p_{u'}(n)}\,\mathbf{h}_{k,u'}(n)\,x_{u'}(n)
	+
	\mathbf{n}_k(n),
	\label{eq:multiuser_rx}
\end{align}
where $\mathcal{U}_k(n)$ denotes the set of users simultaneously active toward node $k$ in block $n$, $p_u(n)$ is the uplink transmit power of user $u$, $\mathbf{h}_{k,u}(n)\in\mathbb{C}^{M_k}$ is the corresponding uplink channel vector, $x_u(n) \in \mathbb{C}$ is the transmitted pilot/data symbol with $\mathbb{E}[|x_u(n)|^2]=1$, and $\mathbf{n}_k(n)\sim\mathcal{CN}(\mathbf{0},\sigma_k^2\mathbf{I}_{M_k})$ is additive noise ($\mathbf{n}_k(n) \in \mathbb{C}^{M_k}$), where $\sigma_k^2$ denotes the noise variance per antenna at node $k$. 
For each active target user $u\in\mathcal{U}_k(n)$, node $k$ applies a user-specific linear combiner $\mathbf{w}_{k,u}(n)\in\mathbb{C}^{M_k}$ to the received vector $\mathbf{y}_{k}(n)$. The resulting scalar output is
\begin{equation}
	y_{k,u}^\mathrm{(out)}(n)=\mathbf{w}_{k,u}^H(n)\mathbf{y}_{k}(n).
\end{equation}
When multiple users are simultaneously active over the same time-frequency resource, the users $u\neq u'$ contribute inter-user interference (IUI); if only one user is active, the interference term vanishes.

Assuming pilot resources are available, let $\hat{\mathbf{h}}_{k,u}(n)\in\mathbb{C}^{M_k}$ denote the estimated uplink channel of active user $u$ at node $k$ in block $n$.
Let $\mathcal{P}_u(n)$ denote the set of users reusing the same pilot as user $u$ in block $n$. Then the pilot-aided channel estimate at node $k$ can be expressed as
\begin{equation}
	\hat{\mathbf{h}}_{k,u}(n)
	=
	\mathbf{h}_{k,u}(n)
	+
	\sum_{\substack{u'\in\mathcal{P}_u(n)\\u'\neq u}}
	\alpha_{u,u'}\,\mathbf{h}_{k,u'}(n)
	+
	\mathbf{e}_{k,u}(n),
	\label{eq:pilot_contamination_model}
\end{equation}
where the second term represents pilot contamination caused by other users sharing the same pilot resource, \(\alpha_{u,u'} \in \mathbb{C}\) denotes the contamination coefficient, and $\mathbf{e}_{k,u}(n)$ collects the residual estimation noise and modeling mismatch.

To suppress both noise and IUI, we adopt a normalized linear MMSE-type combiner
\begin{equation}
	\mathbf{w}_{k,u}(n)
	=
	\frac{\big(\mathbf{R}_y^{(k)}(n)\big)^{-1}\hat{\mathbf{h}}_{k,u}(n)}
	{\left\|\big(\mathbf{R}_y^{(k)}(n)\big)^{-1}\hat{\mathbf{h}}_{k,u}(n)\right\|}.
	\label{eq:mmse_combiner_new}
\end{equation}

Here, the received auto-correlation matrix is defined as
\begin{equation}
	\mathbf{R}_y^{(k)}(n)
	\triangleq
	\mathbb{E}\!\left[\mathbf{y}_{k}(n)\mathbf{y}_{k}^H(n)|\{\mathbf{h}_{k,u}(n)\}_{u\in\mathcal{U}_k(n)}\right],
	\label{eq:Ry_def_new}
\end{equation}
where the expectation is taken over user symbols and noise, conditioned on the instantaneous channel vectors and under mutually uncorrelated user symbols, can be expanded as
\begin{align}
	\mathbf{R}_y^{(k)}(n)=
	&p_{u}(n)\mathbf{h}_{k,u}(n)\mathbf{h}_{k,u}^H(n) \\ \nonumber
	&+
	\sum_{\substack{u'\in\mathcal{U}_k(n)\\u'\neq u}}
	p_{u'}(n)\mathbf{h}_{k,u'}(n)\mathbf{h}_{k,u'}^H(n)
	+
	\sigma_k^2\mathbf{I}_{M_k}.
	\label{eq:Ry_expand}
\end{align}
Therefore, the auto-correlation matrix explicitly captures the desired signal component, the aggregate IUI covariance, and the thermal noise term.
In practice, $\mathbf{R}_y^{(k)}(n)$ is estimated from $N_c$ received samples as
\begin{equation}
	\hat{\mathbf{R}}_y^{(k)}(n)
	=
	\frac{1}{N_c}\sum_{i=1}^{N_c}
	\mathbf{y}_{k}^{(i)}(n)\mathbf{y}_{k}^{(i)H}(n),
	\label{eq:Ry_sample_new}
\end{equation}
where $\mathbf{y}_{k}^{(i)}$ denotes the $i$-th received snapshot at node $k$.
The normalization in \eqref{eq:mmse_combiner_new} ensures $\|\mathbf{w}_{k,u}(n)\|=1$.

Compared with maximum-ratio combining (MRC), the MMSE combiner is more robust in dense multi-user scenarios because it accounts for the spatial covariance of both noise and co-channel interference. Moreover, its computation relies on the instantaneous received statistics and the channel estimate of the target stream, without requiring any higher-layer user identity information.

The uplink channel is modeled as a time-varying Rician fading process. Since the user velocity is assumed negligible relative to the aerial node motion, the dominant Doppler shift is induced by node $k$ and is approximated by
\begin{equation}
	f_{\mathrm d}^{(k,u)}
	=
	\frac{v_k}{c}f_c\cos\!\big(\theta_{\mathrm{elev}}^{(k,u)}\big),
	\label{eq:doppler_new}
\end{equation}
where $v_k$ is the velocity of node $k$, $c=3\times 10^8$~m/s is the speed of light, $f_c$ is the carrier frequency, and $\theta_{\mathrm{elev}}^{(k,u)}\in[0,\pi/2]$ is the elevation angle between node $k$ and user $u$.

Under a narrowband common-Doppler approximation, the uplink channel evolves as
\begin{equation}
	\mathbf{h}_{k,u}(n)
	\approx
	\mathbf{h}_{k,u}^{\mathrm{LoS}}(n)
	e^{j2\pi f_{\mathrm d}^{(k,u)} n T_s}
	+
	\mathbf{h}_{k,u}^{\mathrm{NLoS}}(n),
	\label{eq:rician_channel_new}
\end{equation}
where $\mathbf{h}_{k,u}^{\mathrm{LoS}}(n)\in\mathbb{C}^{M_k}$ denotes the slowly varying line-of-sight component and $\mathbf{h}_{k,u}^{\mathrm{NLoS}}(n)\in\mathbb{C}^{M_k}$ represents the residual diffuse component. 
The dominant deterministic phase rotation is captured through the LoS component, while the residual diffuse component is absorbed into \(\mathbf{h}_{k,u}^{\mathrm{NLoS}}(n)\).
This formulation is especially useful in the present setting because multi-user interference is governed not only by user activity and transmit powers, but also by the temporal evolution and spatial overlap of the corresponding channel vectors.

During the enrollment phase, node $k$ computes a set of reference statistics for each legitimate user $u$ based on $N_{\mathrm{enr}}$ observations. The first quantity is the mean channel vector
\begin{equation}
	\boldsymbol{\mu}_h^{(k,u)}
	\triangleq
	\frac{1}{N_{\mathrm{enr}}}
	\sum_{n=1}^{N_{\mathrm{enr}}}
	\hat{\mathbf{h}}_{k,u}(n),
\end{equation}
where $\boldsymbol{\mu}_h^{(k,u)} \in \mathbb{C}^{M_k}$ represents the average spatial signature of legitimate user $u$ at node $k$, capturing its mean angle-of-arrival structure and average channel gain.
The second quantity is the enrollment channel correlation matrix
\begin{equation}
	\mathbf{R}_h^{(k,u)}
	\triangleq
	\frac{1}{N_{\mathrm{enr}}}
	\sum_{n=1}^{N_{\mathrm{enr}}}
	\hat{\mathbf{h}}_{k,u}(n)\hat{\mathbf{h}}_{k,u}^H(n),
\end{equation}
and the corresponding reference subspace projection matrix is
\begin{equation}
	\mathbf{P}_{\mathrm{ref}}^{(k,u)}
	\triangleq
	\mathbf{U}_{\mathrm{ref}}^{(k,u)}
	\left(\mathbf{U}_{\mathrm{ref}}^{(k,u)}\right)^H,
\end{equation}
where the columns of $\mathbf{U}_{\mathrm{ref}}^{(k,u)} \in \mathbb{C}^{M_k\times r_k}$ are the $r_k$ dominant eigenvectors of $\mathbf{R}_h^{(k,u)}$, with $r_k\leq M_k$. The matrix $\mathbf{P}_{\mathrm{ref}}^{(k,u)}$ captures the dominant spatial subspace occupied by the legitimate user's channel realizations. In NTN scenarios, this subspace may drift slowly due to node motion and geometry variation, and hence periodic re-enrollment may be required.

The reference combiner associated with user $u$ at node $k$ is constructed from the enrollment statistics as
\begin{equation}
	\mathbf{w}_{k,u}^{\mathrm{(ref)}}
	=
	\frac{
		\left(\mathbf{R}_{y,\mathrm{ref}}^{(k)}\right)^{-1}
		\boldsymbol{\mu}_h^{(k,u)}
	}{
		\left\|
		\left(\mathbf{R}_{y,\mathrm{ref}}^{(k)}\right)^{-1}
		\boldsymbol{\mu}_h^{(k,u)}
		\right\|
	},
	\label{eq:ref_combiner}
\end{equation}
where $\mathbf{R}_{y,\mathrm{ref}}^{(k)}$ denotes the received-signal covariance matrix estimated from the enrollment observations of user $u$ at node $k$. This reference combiner is consistent with a normalized linear MMSE-type combiner in \eqref{eq:mmse_combiner_new} and provides a baseline spatial filtering profile for subsequent authentication.

For angular-domain analysis, we employ the unitary DFT matrix $\mathbf{F}_{M_k}\in\mathbb{C}^{M_k\times M_k}$ to transform antenna-domain channel vectors into the beam domain. Let $\mathcal{N}_{\mathrm{main}}^{(k,u)}\subseteq\{1,2,\dots,M_k\}$ denote the set of dominant DFT bins corresponding to the main lobe of legitimate user $u$ as observed at node $k$. These bins are identified during enrollment, for example, as those containing a prescribed fraction of the total beam-domain energy. Since the dominant arrival direction depends on the NTN geometry, $\mathcal{N}_{\mathrm{main}}^{(k,u)}$ may evolve with the elevation angle and should be updated when needed.
Moreover, we define the aggregate post-combining IUI power as
\begin{equation}
	I_{k,u}(n)
	\triangleq
	\sum_{\substack{u'\in\mathcal{U}_k(n)\\u'\neq u}}
	p_{u'}(n)
	\left|
	\mathbf{w}_{k,u}^H(n)\hat{\mathbf{h}}_{k,u'}(n)
	\right|^2 .
	\label{eq:post_comb_interference}
\end{equation}

For each active target user $u\in\mathcal{U}_k(n)$ and each observation block $n$, node $k$ extracts the feature vector
\begin{equation}
	\mathbf{f}_{k,u}(n)
	=
	[f_{k,u,1}(n),\dots,f_{k,u,d_f}(n)]^T,
\end{equation}
where \(d_f=8\) is the feature dimension considered in this paper.
We define the elements of \(\mathbf{f}_{k,u}(n)\in\mathbb{R}^{d_f}\) as follows:

\begin{equation}
	f_{k,u,1}(n)
	=
	\frac{
		\left|
		\hat{\mathbf{h}}_{k,u}^H(n)\boldsymbol{\mu}_h^{(k,u)}
		\right|
	}{
		\|\hat{\mathbf{h}}_{k,u}(n)\|\,
		\|\boldsymbol{\mu}_h^{(k,u)}\|
	}.
	\label{eq:f1_new}
\end{equation}
\(f_{k,u,1}(n)\) is the directional similarity between the current estimated channel and the enrolled mean channel. It measures the alignment of the instantaneous spatial signature with the legitimate reference signature.

\begin{equation}
	f_{k,u,2}(n)
	=
	\frac{
		\sum_{i\in\mathcal{N}_{\mathrm{main}}^{(k,u)}}
		\left|[\mathbf{F}_{M_k}\hat{\mathbf{h}}_{k,u}(n)]_i\right|^2
	}{
		\sum_{i=1}^{M_k}
		\left|[\mathbf{F}_{M_k}\hat{\mathbf{h}}_{k,u}(n)]_i\right|^2
	}.
	\label{eq:f2_new}
\end{equation}
\(f_{k,u,2}(n)\) is the main-lobe concentration ratio in the beam domain. It quantifies how much of the channel energy remains concentrated inside the enrolled dominant angular support.

\begin{equation}
	f_{k,u,3}(n)
	=
	\frac{
		\left|
		\mathbf{w}_{k,u}^H(n)\mathbf{w}_{k,u}^{\mathrm{(ref)}}
		\right|
	}{
		\|\mathbf{w}_{k,u}(n)\|\,
		\|\mathbf{w}_{k,u}^{\mathrm{(ref)}}\|
	}.
	\label{eq:f3_new}
\end{equation}
\(f_{k,u,3}(n)\) is the combiner similarity between the current normalized MMSE combiner and the enrollment-time reference combiner. It reflects the consistency of the spatial filtering direction across time.

\begin{equation}
	f_{k,u,4}(n)
	=
	\frac{
		\left|
		\mathbf{w}_{k,u}^H(n)\mathbf{y}_{k}(n)
		\right|
	}{
		\|\mathbf{w}_{k,u}(n)\|\,
		\|\mathbf{y}_{k}(n)\|
	}.
	\label{eq:f4_new}
\end{equation}
\(f_{k,u,4}(n)\) is a normalized projection score obtained from the combiner output magnitude relative to the received-snapshot norm. It captures the instantaneous compatibility between the combiner and the received signal.

\begin{equation}
	f_{k,u,5}(n)
	=
	1-
	\frac{
		\|\mathbf{P}_{\mathrm{ref}}^{(k,u)}\hat{\mathbf{h}}_{k,u}(n)\|^2
	}{
		\|\hat{\mathbf{h}}_{k,u}(n)\|^2
	}.
	\label{eq:f5_new}
\end{equation}
\(f_{k,u,5}(n)\) is the subspace deviation, namely the fraction of channel energy lying outside the enrolled legitimate-user subspace. Larger values indicate stronger deviation from the enrolled spatial structure.

\begin{equation}
	f_{k,u,6}(n)
	=
	\frac{\exp\!\left(
		-\alpha_{\mathrm d}^{(k,u)}
		\left|
		\hat{\tau}_{k,u}(n)-\tau_{k,u}^{\mathrm{(ref)}}(n)
		\right|
		\right)}{
		1+
		\left|
		\frac{
			\hat{f}_{\mathrm d}^{(k,u)}(n)-f_{\mathrm d,\mathrm{ref}}^{(k,u)}(n)
		}{
			\sigma_\mathrm{f}^{(k,u)}
		}
		\right|^2
	}.
	\label{eq:f6_new}
\end{equation}
\(f_{k,u,6}(n)\) is the Doppler-delay consistency score, which jointly measures the agreement of the observed transmission with the enrolled geometric and kinematic signature of user \(u\). Here, \(\hat{f}_{\mathrm d}^{(k,u)}(n)\) and \(\hat{\tau}_{k,u}(n)\) denote the estimated Doppler shift and propagation delay, respectively, while \(f_{\mathrm d,\mathrm{ref}}^{(k,u)}(n)\) and \(\tau_{k,u}^{\mathrm{(ref)}}(n)\) are their expected reference values.
Moreover, \(\sigma_\mathrm{f}^{(k,u)} > 0\) is a scaling factor for the Doppler mismatch, and \(\alpha_{\mathrm{d}}^{(k,u)} > 0\) is a decay constant that controls how quickly the score drops with increasing delay error.

\begin{equation}
	f_{k,u,7}(n)
	=
	\frac{
		p_u(n)\left|
		\mathbf{w}_{k,u}^H(n)\hat{\mathbf{h}}_{k,u}(n)
		\right|^2
	}{
		I_{k,u}(n)
		+
		\sigma_k^2\|\mathbf{w}_{k,u}(n)\|^2
	}.
	\label{eq:f7_new}
\end{equation}
$f_{k,u,7}(n)$ is the estimated post-combining SINR, measuring the target user's separability from co-channel interferers.

\begin{equation}
	f_{k,u,8}(n)
	=
	\frac{
		I_{k,u}(n)
	}{
		\sigma_k^2\|\mathbf{w}_{k,u}(n)\|^2
	}.
	\label{eq:f8_new}
\end{equation}
$f_{k,u,8}(n)$ denotes the post-combining INR, which measures the aggregate IUI relative to the noise floor and captures interference variations, particularly in dense NTN access scenarios.

Since the normalized MMSE combiner in \eqref{eq:mmse_combiner_new} satisfies $\|\mathbf{w}_{k,u}(n)\|=1$, the terms $\|\mathbf{w}_{k,u}(n)\|^2$ in \eqref{eq:f7_new} and \eqref{eq:f8_new} can be omitted. The first six features measure spatial, subspace, and geometry consistency with the enrolled legitimate profile, while the last two capture interference. Thus, the feature vector is interference-aware and better suited to dense multi-user NTN uplinks than channel-similarity-only features.

\subsection{Authentication}

\subsubsection{Local phase}

At coherence block \(n\), each node \(k\) independently tests observed feature vector \(\mathbf{f}_{k,u}(n)\) against enrolled user identity \(u'\) for local authentication.
The local soft score, interpreted as a legitimacy score with respect to user \(u'\), is computed from the Mahalanobis distance as
\begin{align}
	&d_{M}\big(\mathbf{f}_{k,u}(n);u'\big)
	= \nonumber \\
	&\quad\sqrt{
		\big(\mathbf{f}_{k,u}(n)-\boldsymbol{\mu}_{f}^{(k,u')}\big)^T
		\big(\boldsymbol{\Sigma}_{f}^{(k,u')}\big)^{-1}
		\big(\mathbf{f}_{k,u}(n)-\boldsymbol{\mu}_{f}^{(k,u')}\big)
	},
	\label{eq:local_mahal}
\end{align}
where \(\boldsymbol{\mu}_{f}^{(k,u')} \in \mathbb{R}^{d_f}\) and
\(\boldsymbol{\Sigma}_{f}^{(k,u')} \in \mathbb{R}^{d_f \times d_f}\) denote the mean vector and covariance matrix of the legitimate feature distribution for enrolled user \(u'\) at node \(k\), respectively.
During enrollment, \(N_{\mathrm{enr}}\) legitimate feature snapshots are collected for each enrolled user \(u'\) at node \(k\).
Based on these enrollment samples, the sample mean and covariance are estimated as
\begin{align}
	&\hat{\boldsymbol{\mu}}_{f}^{(k,u')}
	=
	\frac{1}{N_{\mathrm{enr}}}
	\sum_{i=1}^{N_{\mathrm{enr}}}
	\mathbf{f}_{k,u'}(i),
	\label{eq:mean_est}
	\\
	&\hat{\boldsymbol{\Sigma}}_{f}^{(k,u')}
	= \nonumber \\
	&\quad\frac{1}{N_{\mathrm{enr}}-1}
	\sum_{i=1}^{N_{\mathrm{enr}}}
	\big(\mathbf{f}_{k,u'}(i)-\hat{\boldsymbol{\mu}}_{f}^{(k,u')}\big)
	\big(\mathbf{f}_{k,u'}(i)-\hat{\boldsymbol{\mu}}_{f}^{(k,u')}\big)^T.
	\label{eq:cov_est}
\end{align}
In practice, \(d_M(\mathbf{f}_{k,u}(n);u')\) is evaluated by replacing \(\boldsymbol{\mu}_{f}^{(k,u')}\) and \(\boldsymbol{\Sigma}_{f}^{(k,u')}\) with their sample estimates in \eqref{eq:mean_est}--\eqref{eq:cov_est}.

The Mahalanobis distance normalizes the feature space using the spread and correlation structure of legitimate samples. Hence, it is scale-invariant and accounts for inter-feature correlations. A small value of $d_M\big(\mathbf{f}_{k,u}(n);u'\big)$ indicates strong consistency with the legitimate profile of user $u'$, whereas a large value suggests a potential attacker or identity mismatch.
This distance is then mapped to a local soft score defined as
\begin{equation}
	p_{k,u,u'}(n)
	\triangleq
	\frac{1}{1 + e^{\,d_M(\mathbf{f}_{k,u}(n);u') - \tau_k}},
	\label{eq:local_soft}
\end{equation}
where \(\tau_k > 0\) is the local decision threshold at node \(k\).
Moreover, \(p_{k,u,u'}(n)\approx 1\) when \(d_M\big(\mathbf{f}_{k,u}(n);u'\big)\ll \tau_k\), corresponding to a close match with the legitimate profile of user \(u'\), whereas \(p_{k,u,u'}(n)\approx 0\) when \(d_M\big(\mathbf{f}_{k,u}(n);u'\big)\gg \tau_k\), indicating an attacker or an identity mismatch.

\subsubsection{Distributed Verification and Attack Detection}
\label{sec:Adapt_Distibuted_Verification}
The proposed framework operates in two modes.
In the local authentication mode, each node independently compares the observed feature vector against the enrolled legitimate-user profiles. We define $\mathcal{U}_{\mathrm{enr}}$ as the set of enrolled users.
For an observation \(\mathbf{f}_{k,u}(n)\) at node \(k\), let
\begin{equation}
	u_k^\star(n)
	=
	\arg\min_{u' \in \mathcal{U}_{\mathrm{enr}}}
	d_M\big(\mathbf{f}_{k,u}(n);u'\big),
	\label{eq:best_identity}
\end{equation}
denote the best-matching enrolled identity, and define the corresponding minimum distance as
\begin{equation}
	d_{k,\min}(n)
	=
	\min_{u' \in \mathcal{U}_{\mathrm{enr}}}
	d_M\big(\mathbf{f}_{k,u}(n);u'\big).
	\label{eq:min_dist}
\end{equation}
Let \(\tau_{\mathrm{ac}} > 0\) denote the local acceptance threshold.
If \(d_{k,\min}(n) \le \tau_{\mathrm{ac}}\), the observation at node \(k\) is considered locally consistent with the enrolled profile of user \(u_k^\star(n)\).
However, if there exists at least one participating node \(k \in \mathcal{K}_u\) such that $d_{k,\min}(n) > \tau_{\mathrm{ac}}$, the transmission is flagged as suspicious and the system switches to the adaptive distributed verification mode.

In the distributed verification (attack detection) mode, multiple nodes collaboratively contribute to the final authentication decision.
The FC collects from each participating node \(k \in \mathcal{K}_u\) the local soft score
\(p_{k,u,u_k^\star}(n)\), the associated best-matching identity \(u_k^\star(n)\), and the time-of-arrival (TOA) estimate \(\hat{\tau}_{k,u}(n)\).
The TOA measurement is modeled as
\begin{equation}
	\hat{\tau}_{k,u}(n)
	=
	\frac{\|\mathbf{r}_u(n) - \mathbf{r}_k(n)\|}{c}
	+
	\epsilon_{k,u}(n),
	\label{eq:toa}
\end{equation}
where \(\mathbf{r}_u(n) \in \mathbb{R}^3\) is the unknown transmitter position, \(\mathbf{r}_k(n) \in \mathbb{R}^3\) is the known position of node \(k\), \(c\) is the speed of light, and \(\epsilon_{k,u}(n)\) denotes the TOA estimation error. 
A model for $\epsilon_{k,u}(n)$ is zero-mean Gaussian noise with variance $\sigma_{\mathrm{toa}}^2$.
We assume that the aerial nodes are time-synchronized through the backhaul network.
Using the TOA measurements collected from the nodes in \(\mathcal{K}_u\), the FC forms TDOA observations and computes a position estimate \(\hat{\mathbf{r}}_u(n)\) for the transmitter.
Based on this estimate, two location-consistency scores are constructed.
The first score quantifies the stability of the estimated position over time:
\begin{align}
	f_{\mathrm{tri},u}(n)
	&=
	\exp\!\left(
	-\frac{
		\operatorname{Tr}\,\bigl(
		\widehat{\mathrm{Cov}}(\hat{\mathbf{r}}_u(n))
		\bigr)
	}{
		2\sigma_{\mathrm{pos}}^2
	}
	\right),
	\label{eq:tri_score}
\end{align}
where \(\widehat{\mathrm{Cov}}(\hat{\mathbf{r}}_u(n))\) denotes the empirical covariance of recent position estimates and \(\sigma_{\mathrm{pos}}^2\) is a scaling parameter.
A small value of \(\sigma_{\mathrm{pos}}\) makes the score more sensitive to temporal fluctuations in the estimated position.

The second score measures the consistency between the estimated transmitter location and the reference location associated with the claimed or best-matching identity:
\begin{align}
	f_{\mathrm{dist},u,u^\star}(n)
	&=
	\exp\!\left(
	-\frac{
		\|\hat{\mathbf{r}}_u(n)-\mathbf{r}_{u^\star}^{\mathrm{(ref)}}(n)\|^2
	}{
		2\sigma_{\mathrm{dist}}^2
	}
	\right),
	\label{eq:dist_score}
\end{align}
where \(\mathbf{r}_{u^\star}^{\mathrm{(ref)}}(n)\) is the reference or claimed location associated with identity \(u^\star\), and \(\sigma_{\mathrm{dist}}^2\) controls the tolerance to spatial mismatch.
A smaller \(\sigma_{\mathrm{dist}}\) imposes a stricter consistency requirement between the estimated and claimed positions.

At the FC, the final decision is obtained through a scalable graph-based fusion mechanism rather than direct feature concatenation.
Specifically, among all participating nodes in \(\mathcal{K}_u\), the FC first identifies the \(L\) most reliable nodes based on their local verification evidence.
It then constructs a graph over the selected nodes using delay, Doppler, and geometric consistency information, performs graph message passing, and generates a graph-level representation through attention-based readout.
Finally, the global triangulation-consistency and identity-location consistency scores are appended to the graph-level representation before the final output layer.
The detailed formulation is provided in Section~\ref{subsec:topL_graph_fusion}.

\subsection{Environment and Design Variables for NTN}

The environmental and link-level context for user \(u\) observed at node \(k\) is represented by
\begin{equation}
	Z_{k,u}
	\triangleq
	\bigl\{
	\gamma_{k,u},\;
	M_k,\;
	\sigma_{\theta,k},\;
	K_\mathrm{r}^{(k,u)},\;
	H_\mathrm{alt}^{(k)},\;
	\theta_\mathrm{elev}^{(k,u)},\;
	v_k
	\bigr\}.
	\label{eq:envZ}
\end{equation}
Here, \(\gamma_{k,u}\) denotes the received SNR of user \(u\) at node \(k\), expressed in dB;
\(\sigma_{\theta,k}\) is the angular spread observed at node \(k\), which is typically small in NTN links dominated by strong LoS components;
\(K_\mathrm{r}^{(k,u)}\) is the Rician factor of the link between user \(u\) and node \(k\);
\(H_\mathrm{alt}^{(k)}\in\mathbb{R}_{+}\) is the altitude of node \(k\), measured in km.
The variables in \(Z_{k,u}\) characterize the operating environment and the link geometry.
In particular, low elevation angles generally lead to longer propagation distances and may increase path loss and Doppler sensitivity, while higher node velocities increase the Doppler shift according to \eqref{eq:doppler_new}.
Moreover, the Rician factor, angular spread, pilot reuse factor, and analog-to-digital converter (ADC) resolution affect the quality of the extracted physical-layer features and, consequently, the authentication performance.

We distinguish these environmental variables from controllable design variables.
Let $\mathbf{T} \in \mathcal{T}$ denote the treatment, i.e., the controllable system configuration selected by the designer, where \(\mathcal{T}\) is the set of admissible treatment values.
Depending on the experiment, \(\mathbf{T}\) may represent a scalar design choice, such as the ADC resolution, \(\mathbf{T}=N_\mathrm{bit}^{(k)}\),
the pilot reuse factor, \(\mathbf{T}=K_\mathrm{p}^{(k)}\),
the top-\(L\) selection budget, \(\mathbf{T}=L\),
or the node orbit/platform type, e.g., HAPS, LEO, or GEO.
More generally, \(\mathbf{T}\) may be a vector of design parameters
\begin{equation}
	\mathbf{T}
	=
	\bigl[
	N_\mathrm{bit}^{(k)},\;
	K_\mathrm{p}^{(k)},\;
	L,\;
	\mathrm{orbit}^{(k)}
	\bigr]^T.
	\label{eq:treatment_vector}
\end{equation}
Each realization of \(\mathbf{T}\) corresponds to a particular system configuration whose effect on the authentication performance metrics, such as TPR, FPR, AUC, or detection accuracy, can be evaluated.
Unlike the environmental variables \(Z_{k,u}\), which describe the propagation, mobility, and hardware context experienced by the link, the treatment variables represent design choices that can be actively controlled by the system designer.
Depending on the considered NTN deployment, such treatments can be configured either globally at the network level or locally on a per-node basis.

\section{Proposed Method}

Our method has three main goals in NTN. First, it estimates how design choices in \(\mathbf{T}\) causally affect authentication under severe Doppler, long delays, and fast channel variations, using causal learning to separate true effects from spurious correlations and account for environmental confounders. This also reveals whether the effect is direct or mediated through physical-layer features. Second, it trains a meta-learner that quickly adapts to new NTN environments, while incorporating the learned causal knowledge to improve robustness and consistency. Third, if the meta-learner cannot authenticate a user, a TDOA-based attack detection module is activated. A graph attention network then uses only the top-\(L\) most reliable nodes. This improves scalability and reduces complexity while preserving informative spatial observations.

\subsection{SCM for NTN}

The feature vector is obtained via~\eqref{eq:f1_new}--\eqref{eq:f8_new}, yet we need to discover the causal relationships between the environment, treatments, and the feature vector. For each node \(k\) and user \(u\), the data-generating process is given by the SCM:
\begin{equation}
	\mathbf{f}_{k,u}(n) = g_{f,k}(\mathbf{T}, Z_{k,u}, U_{f,k}), 
	\label{eq:scm_ntn}
\end{equation}
where \(g_{f,k}(\cdot)\) is an unknown deterministic function representing the underlying structural mechanisms, which are approximated in practice using neural network models in \eqref{eq:phi}. 
The variable \(U_{f,k}\) denotes exogenous random variable that is assumed to be mutually independent of \(\mathbf{T}\) and \(Z_{k,u}\).
It captures all unobserved sources of randomness affecting the feature extraction and authentication outcome, including thermal noise, channel estimation errors, and hardware impairments. 
The unknown function \(g_{f,k}\) in \eqref{eq:scm_ntn} is implemented indirectly through the neural network components \(\Phi_{\phi}\) for node \(k\).
The structural mechanisms are learned indirectly through the neural components.
Hence, we use a neural network as a feature extractor \(\Phi_{\phi}\) which maps the raw input \(\mathcal{X}_{k,u}(n)\) to an approximated feature vector \(\widehat{\mathbf{f}}_{k,u}(n)\). 
\begin{equation}
	\widehat{\mathbf{f}}_{k,u}(n) = \Phi_{\phi}(\mathcal{X}_{k,u}(n)),
	\label{eq:phi}
\end{equation}
The raw input \(\mathcal{X}_{k,u}(n)\) includes \(\mathbf{y}_{k}(n)\), \(\hat{\mathbf{h}}_{k,u}(n)\), \(\hat{f}_{\mathrm d}^{(k,u)}(n)\), and \(\hat{\tau}_{k,u}(n)\).
In summary, in reality the features are obtained via the relation in \eqref{eq:f1_new}--\eqref{eq:f8_new}. These can be considered as an unknown function of \(\mathbf{T}\), \(Z_{k,u}\), and \(U_{f,k}\), as defined in \eqref{eq:scm_ntn}. Thus, we estimate this function using \eqref{eq:phi}. 
Generally, if the approximated feature \(\hat{\mathbf{f}}_{k,u}(n)\) is similar to extracted feature \(\mathbf{f}_{k,u}(n)\), the transmission is more likely to be authenticated as legitimate.
Furthermore, we find the identity by~\eqref{eq:best_identity}.

We use \(Y_{k,u}(n)\in\{0,1\}\) as the binary local authentication outcome at node \(k\), where \(Y_{k,u}(n)=0\) indicates that the observed transmission is authenticated as legitimate and \(Y_{k,u}(n)=1\) otherwise.

In practice, we produce a local authentication outcome score \(\widehat{Y}_{k,u}(n)\) at node \(k\) with the method in~\ref{sec:Adapt_Distibuted_Verification} by using estimated feature vector \(\hat{\mathbf{f}}_{k,u}(n)\) instead of \(\mathbf{f}_{k,u}(n)\):
\begin{equation}
	\widehat{Y}_{k,u}(n) = 
	\begin{cases}
		0, & \text{if } \; d_{k,\min}(n) \le \tau_{\mathrm{ac}} , \\
		1, & \text{otherwise}.
	\end{cases}
\end{equation}

As illustrated in Fig.~\ref{fig:SCM}, the corresponding causal graph includes the directed edges:
\(Z_{k,u} \to \mathbf{f}_{k,u}\), \(Z_{k,u} \to Y_{k,u}(n)\), \(\mathbf{T} \to \mathbf{f}_{k,u}(n)\), \(\mathbf{T} \to Y_{k,u}(n)\), and \(\mathbf{f}_{k,u}(n) \to Y_{k,u}(n)\). 
In the assumed causal graph, 
both \(Z_{k,u}\) and \(\mathbf{T}\) act as confounders for the causal relationship 
\(\mathbf{f}_{k,u}(n) \to Y_{k,u}(n)\), since each directly affects both the cause (\(\mathbf{f}_{k,u}(n)\)) and the outcome (\(Y_{k,u}(n)\)).
We further allow statistical dependence between \(Z_{k,u}\) and \(\mathbf{T}\), reflecting that system design choices may depend on the operating environment. Given the causal graph (Fig.~\ref{fig:SCM}), two causal relationships must be distinguished:

\begin{enumerate}
	\item For the effect of \(\mathbf{f}_{k,u}(n)\) on \(Y_{k,u}(n)\):
	Both \(Z_{k,u}\) and \(\mathbf{T}\) are direct confounders. 
	Hence, to identify the causal effect \(\mathbf{f}_{k,u}(n) \to Y_{k,u}(n)\), 
	we must condition on both \(Z_{k,u}\) and \(\mathbf{T}\) (i.e., include them as control variables). 
	Failure to condition on either one opens a backdoor path and biases the estimate.
	
	\item For the effect of \(\mathbf{T}\) on \(Y_{k,u}(n)\): 
	There is no direct edge from \(Z_{k,u}\) to \(\mathbf{T}\); the statistical dependence between them arises from an unobserved confounder \(U_{f,k}\). 
	This creates the backdoor path \(\mathbf{T} \leftarrow U_{f,k} \to Z_{k,u} \to Y_{k,u}(n)\). 
	Conditioning on \(Z_{k,u}\) alone blocks this path because \(Z_{k,u}\) is a descendant of \(U_{f,k}\) and a parent of \(Y\). 
	Therefore, for the causal effect \(\mathbf{T} \to Y_{k,u}(n)\), it is sufficient to condition only on \(Z_{k,u}\), not on \(\mathbf{T}\) itself (as that would condition on the treatment and introduce bias). 
	The unobserved confounder \(U_{f,k}\) does not need to be measured; its backdoor path is blocked by the observable \(Z_{k,u}\).
\end{enumerate}

\begin{figure}[t]
	\centering
	\includegraphics[width=0.40\columnwidth]{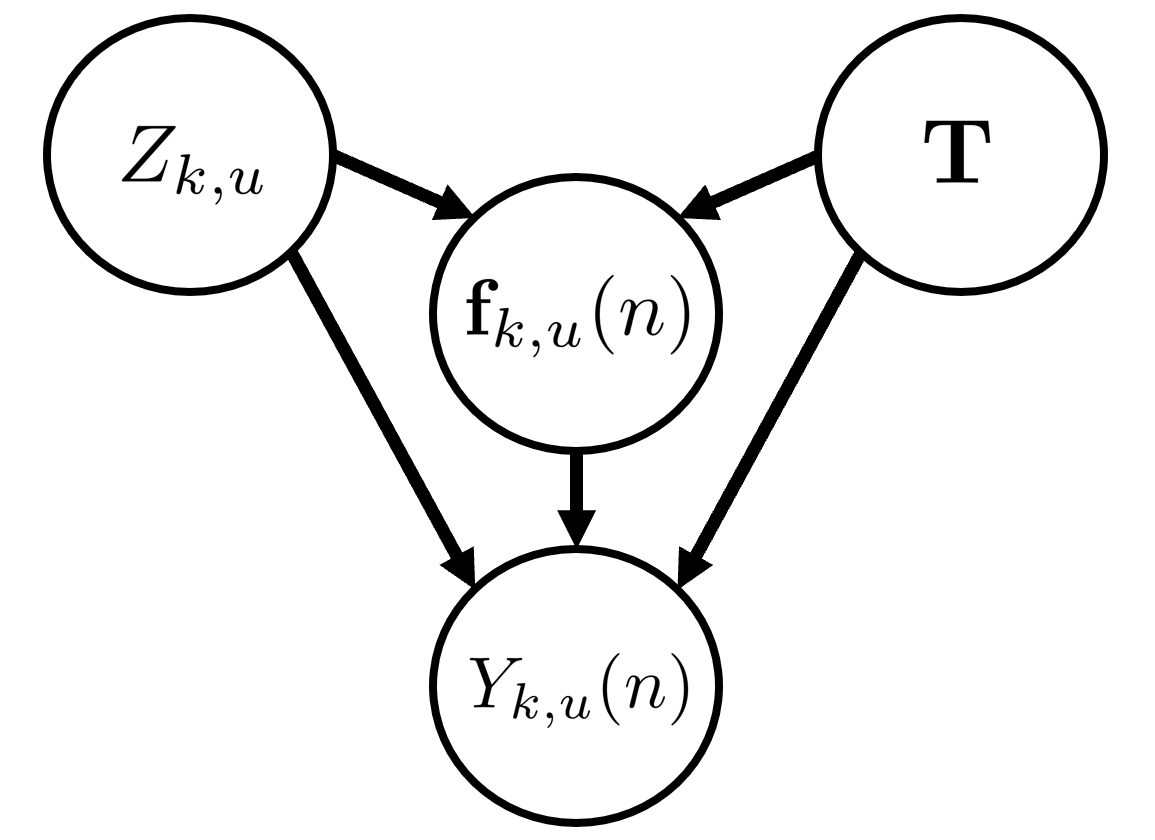}
	\caption{Causal graph of the data-generating process.}
	\label{fig:SCM}
\end{figure}

In NTN scenarios, there exist unobserved confounders (e.g., network congestion, operator policies) that affect both the treatment \(\mathbf{T}\) and the environmental variables \(Z_{k,u}\). These unobserved confounders create spurious backdoor paths. Although \(Z_{k,u}\) themselves are not direct confounders (they have no edge to \(\mathbf{T}\)), they act as sufficient adjustment variables because conditioning on them blocks all backdoor paths between \(\mathbf{T}\) and \(Y_{k,u}(n)\). Therefore, we assume that the set \(Z_{k,u}\) satisfies the backdoor criterion relative to \((\mathbf{T}, Y_{k,u}(n))\). Furthermore, we assume positivity (overlap) and the stable unit treatment value assumption (SUTVA). Under SUTVA and the backdoor criterion, the causal effect of \(\mathbf{T}\) on \(Y_{k,u}(n)\) is identifiable from observational NTN data.

Let \(t_0, t_1 \in \mathcal{T}\) be two treatment values of interest.
For each node \(k\) and user \(u\), the interventional mean is defined as
\begin{equation}
	\mu_{k,u}(t) \triangleq \mathbb{E}\big[\widehat{Y}_{k,u}(n) \mid \mathrm{do}(\mathbf{T}=t)\big],
	\label{eq:muY}
\end{equation}
and the corresponding average causal effect (ACE) is
\begin{equation}
	\mathrm{ACE}_{k,u}(t_1, t_0) 
	\triangleq \mu_{k,u}(t_1) - \mu_{k,u}(t_0).
	\label{eq:ace}
\end{equation}

Under the backdoor assumption with respect to \(Z_{k,u}\), the interventional mean can be identified via
\begin{equation}
	\mu_{k,u}(t) 
	= \mathbb{E}_{Z_{k,u}}\Big[\mathbb{E}\big[\widehat{Y}_{k,u}(n) \mid \mathbf{T}=t, Z_{k,u}\big]\Big].
	\label{eq:backdoor}
\end{equation}

Under the standard consistency, conditional exchangeability, and positivity assumptions, the interventional mean \(\mu_{k,u}(t)\) is identifiable from observational data via the backdoor formula; the proof is provided in Appendix A-A of the Supplemental Material.

The conditional average treatment effect (CATE) at a given environment realization \(z_{k,u}\) is defined as
\begin{align}
	&\mathrm{CATE}_{k,u}(t_1, t_0 \mid z_{k,u}) \triangleq \nonumber \\ 
	&\quad \mathbb{E}\big[\widehat{Y}_{k,u}(n) \mid \mathrm{do}(\mathbf{T}=t_1), Z_{k,u}=z_{k,u}\big] \nonumber \\
	&\quad-\mathbb{E}\big[\widehat{Y}_{k,u}(n) \mid \mathrm{do}(\mathbf{T}=t_0), Z_{k,u}=z_{k,u}\big].
	\label{eq:cate}
\end{align}
Under the backdoor condition, this can be written as
\begin{align}
	&\mathrm{CATE}_{k,u}(t_1, t_0 \mid z_{k,u}) = \nonumber \\
	&\quad \mathbb{E}\big[\widehat{Y}_{k,u}(n) \mid \mathbf{T}=t_1, Z_{k,u}=z_{k,u}\big] \nonumber \\ 
	&\quad - \mathbb{E}\big[\widehat{Y}_{k,u}(n) \mid \mathbf{T}=t_0, Z_{k,u}=z_{k,u}\big].
	\label{eq:cate_2}
\end{align}

In NTN scenarios, CATE analysis is particularly useful for characterizing how the effectiveness of a given system design varies with environmental factors such as the elevation angle \(\theta_{\mathrm{elev}}^{(k,u)}\) and node velocity \(v_k\).

Let $\mathbf{f}_{k,u}(t')$ denote the potential feature vector at node $k$ for user $u$ under treatment $T=t'$ at time index \(n\). For notational simplicity, the time index $n$ is omitted throughout this section. This vector serves as the mediator according to the causal graph. Let $\widehat{Y}_{k,u}(t,\mathbf{f}_{k,u}(t'))$ denote the model-predicted potential outcome at time index $n$ when the treatment is set to $t$ and the mediator is set to $\mathbf{f}_{k,u}(t')$. We also omit \(n\) from $\widehat{Y}_{k,u}(t,\mathbf{f}_{k,u}(t'))$ for simplicity.

Thus, the natural direct effect (NDE) is defined as
\begin{align}
	\mathrm{NDE}_{k,u}(t_1, t_0)
	=&\ \mathbb{E}\!\left[
	\widehat{Y}_{k,u}\bigl(t_1,\mathbf{f}_{k,u}(t_0)\bigr)
	\right] \nonumber \\
	&-\mathbb{E}\!\left[
	\widehat{Y}_{k,u}\bigl(t_0,\mathbf{f}_{k,u}(t_0)\bigr)
	\right],
	\label{eq:nde}
\end{align}
and the natural indirect effect (NIE) is given by
\begin{align}
	\mathrm{NIE}_{k,u}(t_1, t_0)
	=&\ \mathbb{E}\!\left[
	\widehat{Y}_{k,u}\bigl(t_0,\mathbf{f}_{k,u}(t_1)\bigr)
	\right] \nonumber \\
	&-\mathbb{E}\!\left[
	\widehat{Y}_{k,u}\bigl(t_0,\mathbf{f}_{k,u}(t_0)\bigr)
	\right],
	\label{eq:nie}
\end{align}
The total effect is decomposed as
\begin{equation}
	\mathrm{ACE}_{k,u}(t_1, t_0)
	= \mathrm{NDE}_{k,u}(t_1, t_0)
	+ \mathrm{NIE}_{k,u}(t_1, t_0),
	\label{eq:ace_decomp}
\end{equation}
which is formally established in Appendix A-C of the Supplemental Material.

Given \(N_s\) samples collected from multiple NTN environments and nodes, we denote the dataset as
\(\{(Z_{k,u}^{(i)}, \mathbf{T}^{(i)}, \mathbf{f}_{k,u}^{(i)}, \widehat{Y}_{k,u}^{(i)}(n))\}_{i=1}^{N_s}\),
where each sample corresponds to an observation at node \(k\), user \(u\), sample \(i\), and time index \(n\). To simplify the formulation we omit time index \(n\).
We define the nuisance models as
\begin{equation}
	e_t(z) \triangleq \mathbb{P}(\mathbf{T}^{(i)}=t \mid Z_{k,u}^{(i)}=z),
	\label{eq:propensity}
\end{equation}
\begin{equation}
	m_t(z) \triangleq \mathbb{E}[\widehat{Y}_{k,u}^{(i)}(n) \mid \mathbf{T}^{(i)}=t, Z_{k,u}^{(i)}=z].
	\label{eq:outcome}
\end{equation}

The doubly robust (DR) estimator (also known as augmented inverse probability weighting, AIPW) of the interventional mean
\(\mu_{k,u}(t)\)
is given by
\begin{align}
	&\widehat{\mu}_{k,u}^{\mathrm{DR}}(t) 
	= \nonumber \\
	& \quad \frac{1}{N}\sum_{i=1}^{N}\left[
	m_t(Z_{k,u}^{(i)})
	+ \frac{\mathbf{1}\{\mathbf{T}^{(i)}=t\}}{e_t(Z_{k,u}^{(i)})}
	\big(\widehat{Y}_{k,u}^{(i)}(n) - m_t(Z_{k,u}^{(i)})\big)
	\right].
	\label{eq:dr}
\end{align}

This estimator is consistent under standard ignorability, overlap, and correct specification of either the propensity model \(e_t(z)\) or the outcome model \(m_t(z)\); the proof is given in Appendix A-B of the Supplemental Material.
For comparison, the inverse probability of treatment weighting (IPTW) estimator is defined as
\begin{equation}
	\widehat{\mu}_{k,u}^{\mathrm{IPTW}}(t) 
	= \frac{1}{N_s}\sum_{i=1}^{N_s} 
	\frac{\mathbf{1}\{\mathbf{T}^{(i)}=t\}\,\widehat{Y}_{k,u}^{(i)}(n)}{e_t(Z_{k,u}^{(i)})}.
	\label{eq:iptw}
\end{equation}

\subsection{Meta-Learning with Causal Invariance for NTN}

Let $\mathcal{E} = \{e_1, e_2, \dots, e_{|\mathcal{E}|}\}$ denote the set of meta-training NTN environments. 
Each environment $e \in \mathcal{E}$ corresponds to a distinct operating condition characterized by context variables $Z_{k,u}^{e}$. For each environment $e$ and observing node $k$, we define a meta-learning task indexed by the pair $(e,k)$, with training dataset $D_{\mathrm{train}}^{e,k}$ and validation dataset $D_{\mathrm{val}}^{e,k}$. Each dataset contains samples from multiple users.
We use a shared authentication model parameterized by $\phi$, as defined in \eqref{eq:scm_ntn}. $\Phi_{\phi}$ is trained to capture causal and environment-invariant structure across tasks.
For each task $(e,k)$, one inner-loop adaptation step is performed as
\begin{equation}
	\phi'_{e,k}
	=
	\phi
	-
	\alpha \nabla_{\phi}
	\mathcal{L}_{\mathrm{task}}^{e,k}(\phi;D_{\mathrm{train}}^{e,k}),
	\label{eq:maml_inner}
\end{equation}
where $\alpha$ is the inner-loop learning rate and $\mathcal{L}_{\mathrm{task}}^{e,k}$ denotes the task-specific authentication loss.
The meta-parameters are then updated using the validation losses across tasks:
\begin{equation}
	\phi \leftarrow \phi - \beta \sum_{e,k} \nabla_{\phi}
	\mathcal{L}_{\mathrm{task}}^{e,k}(\phi'_{e,k};D_{\mathrm{val}}^{e,k}),
	\label{eq:maml_outer}
\end{equation}
where $\beta$ is the outer-loop learning rate.

To encourage $\Phi_{\phi}$ to extract environment-invariant causal features, we add an invariant risk minimization (IRM) penalty:
\begin{equation}
	\mathcal{R}_{\mathrm{IRM}}(\phi)
	=
	\sum_{e,k}
	\left\|
	\nabla_{w}\, R^{e,k}(w,\phi)\big|_{w=1}
	\right\|^2,
	\label{eq:irm}
\end{equation}
where $R^{e,k}(w,\phi)$ denotes the empirical risk for task $(e,k)$ under a scalar classifier parameter $w$ used only to construct the IRM penalty.
If the learned representation is invariant across environments, the IRM penalty vanishes; this statement is formally proved in Appendix A-D of the Supplemental Material.

To further promote causal stability, we regularize the task-specific DR estimates of the interventional mean. Let $\widehat{\mu}_{e,k}^{\mathrm{DR}}(t)$ denote the DR estimate of the interventional mean under treatment $t$ computed from task $(e,k)$, and let
\begin{equation}
	\bar{\mu}_{-e,-k}^{\mathrm{DR}}(t)
	=
	\frac{1}{|\mathcal{I}|-1}
	\sum_{(e',k')\in\mathcal{I} \setminus \{(e,k)\}}
	\widehat{\mu}_{e',k'}^{\mathrm{DR}}(t),
\end{equation}
where $\mathcal{I}=\{(e,k):e\in\mathcal{E}, k\in\mathcal{K}\}$ is the set of all task indices. We define the DR consistency regularizer as
\begin{equation}
	\mathcal{R}_{\mathrm{DR}}
	=
	\sum_{t\in\mathcal{T}}
	\sum_{(e,k)\in\mathcal{I}}
	\left(
	\widehat{\mu}_{e,k}^{\mathrm{DR}}(t)
	-
	\bar{\mu}_{-e,-k}^{\mathrm{DR}}(t)
	\right)^2.
	\label{eq:dr_reg}
\end{equation}
This term encourages, but does not strictly enforce, stability of interventional responses across environments and nodes.

The final meta-training objective is
\begin{equation}
	\min_{\phi}
	\sum_{e,k}
	\mathcal{L}_{\mathrm{task}}^{e,k}\!\left(
	\phi'_{e,k};D_{\mathrm{val}}^{e,k}
	\right)
	+ \lambda_{\mathrm{IRM}}\,\mathcal{R}_{\mathrm{IRM}}(\phi)
	+ \lambda_{\mathrm{DR}}\,\mathcal{R}_{\mathrm{DR}},
	\label{eq:meta_obj}
\end{equation}
where $\lambda_{\mathrm{IRM}}$ and $\lambda_{\mathrm{DR}}$ control the strength of the invariance and causal-consistency regularization terms, respectively.

\subsection{Counterfactual Design Policy for NTN}

We learn a policy \(\pi: Z_{k,u} \mapsto \mathcal{T}\) that selects the design configuration for each environment as observed by node \(k\). Thus, for a given \(Z_{k,u}\), the applied treatment is \(\pi(Z_{k,u}) \in \mathcal{T}\).
The goal is to maximize the interventional performance under the induced policy, defined as \(J(\pi) = \mathbb{E}_{Z_{k,u}}\big[ \mu_{k,u}(\pi(Z_{k,u})) \big]\),
where \(\mu_{k,u}(t)\) is the interventional mean defined in \eqref{eq:muY}.

We estimate \(J(\pi)\) from logged data collected from all nodes using the DR off-policy estimator:
\begin{align}
	&\widehat{J}_{\mathrm{DR}}(\pi) = \nonumber \\
	&\quad \frac{1}{N_s}\sum_{i=1}^{N_s} m_{\pi(Z_{k,u}^{(i)})}(Z_{k,u}^{(i)})  \nonumber \\
	&\quad + \frac{1}{N_s}\sum_{i=1}^{N_s} \frac{\mathbf{1}\{\mathbf{T}^{(i)} = \pi(Z_{k,u}^{(i)})\}}{e_{\mathbf{T}^{(i)}}(Z_{k,u}^{(i)})}
	\big( \widehat{Y}_{k,u}^{(i)}(n) - m_{\mathbf{T}^{(i)}}(Z_{k,u}^{(i)}) \big),
	\label{eq:ope}
\end{align}
where \(m_t(\cdot)\) is the outcome model in \eqref{eq:outcome}, \(e_t(\cdot)\) is the propensity model in \eqref{eq:propensity}.
This estimator is DR, i.e., it remains consistent if either the outcome model or the propensity model is correctly specified.
Its off-policy nature is especially valuable in NTN, as it enables policy evaluation and optimization without live deployment, thereby reducing the cost and operational risk of HAPS and satellite platforms.

\subsection{Scalable Top-\(L\) Graph-Based Fusion at the FC}
\label{subsec:topL_graph_fusion}

In attack detection mode, the FC collects local authentication evidence from nodes in $\mathcal{K}_u$ and applies a top-$L$ graph-based fusion strategy. For each node $k\in\mathcal{K}_u$, it first forms a node-level evidence vector as
\begin{equation}
	\mathbf{x}_{k,u}(n)
	\triangleq
	\Bigl[
	p_{k,u,u_k^\star}(n),\,
	\mathbf{f}_{k,u}^T(n),\,
	\hat{\tau}_{k,u}(n),\,
	\hat{f}_{\mathrm d}^{(k,u)}(n),\,
	\mathbf{r}_k^T(n)
	\Bigr]^T,
	\label{eq:fc_node_evidence}
\end{equation}
where \(\mathbf{x}_{k,u}(n) \in \mathbb{R}^{d_x}\) and \(d_x \triangleq d_f+6\).
The FC evaluates the reliability of the local evidence from node \(k\) through a trainable function \(\rho_{k,u}(n)
\triangleq
g_{\theta_{\rho}}\!\bigl(\mathbf{x}_{k,u}(n)\bigr)\),
where $g_{\theta_{\rho}}$ is a learnable node-scoring network that maps $\mathbb{R}^{d_x}$ to $\mathbb{R}$ and is parameterized by $\theta_{\rho}$.
Hence, the FC selects the \(L\) nodes:
\begin{equation}
	\mathcal{K}_u^{(L)}(n)
	\triangleq
	\operatorname{TopL}_{k\in\mathcal{K}_u}
	\bigl(\rho_{k,u}(n)\bigr),
	\label{eq:fc_topL_set}
\end{equation}
where \(L\in\mathbb{N}\) is a design parameter satisfying \(1\le L\le |\mathcal{K}_u|\).
To avoid ambiguity between node labels and matrix indices, we enumerate the selected nodes as \(\mathcal{K}_u^{(L)}(n)
=
\{\kappa_1,\kappa_2,\ldots,\kappa_L\}\).
A graph is then constructed over the selected nodes
$
\mathcal{G}_u^{(L)}(n)
\triangleq
\bigl(
\mathcal{V}_u^{(L)}(n),\,
\mathcal{E}_u^{(L)}(n)
\bigr)
$,
where $\mathcal{V}_u^{(L)}(n)$ and $\mathcal{E}_u^{(L)}(n)$ are the set of nodes and the set of edges of the graph, respectively. In this paper, we consider $\mathcal{V}_u^{(L)}(n)=\mathcal{K}_u^{(L)}(n)$.
The graph is complete on the selected nodes, i.e., \((\kappa_i,\kappa_j)\in\mathcal{E}_u^{(L)}(n)\), \(\forall\, i,j\in\{1,\ldots,L\},\; i\neq j\).

For the selected nodes, the corresponding node features are simply obtained
by evaluating \eqref{eq:fc_node_evidence} at \(k=\kappa_i\), i.e.,
\begin{equation}
	\mathbf{x}_{\kappa_i,u}(n)
	\triangleq
	\mathbf{x}_{k,u}(n)\big|_{k=\kappa_i},
	\qquad i=1,\ldots,L.
\end{equation}

For each pair of selected nodes \((\kappa_i,\kappa_j)\), the FC further defines the edge feature vector $\mathbf{e}_{i,j,u}(n) \in \mathbb{R}^{3}$ as following:
\begin{equation}
	\mathbf{e}_{i,j,u}(n) \triangleq 
	\bigl[ \Delta\tau_{i,j,u}(n),\; \Delta f_{i,j,u}(n),\; \Delta r_{i,j}(n) \bigr]^T,
\end{equation}
where
\(\Delta\tau_{i,j,u}(n)=\hat{\tau}_{\kappa_i,u}(n)-\hat{\tau}_{\kappa_j,u}(n)\), \(\Delta f_{i,j,u}(n)=\hat{f}_{\mathrm d}^{(\kappa_i,u)}(n)-\hat{f}_{\mathrm d}^{(\kappa_j,u)}(n)\), and \(\Delta r_{i,j}(n)=\|\mathbf{r}_{\kappa_i}(n)-\mathbf{r}_{\kappa_j}(n)\|\).

Using the pairwise delay and Doppler consistency, the FC constructs a weighted adjacency matrix
\(\mathbf{A}_u^{(L)}(n)\in\mathbb{R}^{L\times L}\), whose \((i,j)\)-th element is given by
\begin{align}
	&[\mathbf{A}_u^{(L)}(n)]_{i,j}
	\triangleq \nonumber \\
	&\quad \begin{cases}
		\exp\!\left(
		-\dfrac{
			\left|\Delta\tau_{i,j,u}(n)\right|
		}{\sigma_\mathrm{\tau}}
		-\dfrac{
			\left|\Delta f_{i,j,u}(n)\right|
		}{\sigma_\mathrm{d}}
		\right),
		& i\neq j,\\[3mm]
		0, & i=j,
	\end{cases}
	\label{eq:fc_adjacency}
\end{align}
where \(\sigma_\mathrm{\tau}>0\) and \(\sigma_\mathrm{d}>0\) are design parameters controlling the sensitivity to delay and Doppler mismatch, respectively.
Next, the FC stacks the selected node features into the matrix $\mathbf{X}_u^{(L)}(n) \in\mathbb{R}^{L\times d_x}$, defined as
\begin{equation}
	\mathbf{X}_u^{(L)}(n)
	\triangleq
	\begin{bmatrix}
		\mathbf{x}_{\kappa_1,u}^T(n)\\
		\mathbf{x}_{\kappa_2,u}^T(n)\\
		\vdots\\
		\mathbf{x}_{\kappa_L,u}^T(n)
	\end{bmatrix}.
	\label{eq:fc_stacked_X}
\end{equation}
These features are first mapped into a latent space through a trainable input embedding function \(\mathbf{H}_u^{(0)}(n)
\triangleq
\phi_{\mathrm{in}}\!\bigl(\mathbf{X}_u^{(L)}(n)\bigr)\),
where $\mathbf{H}_u^{(0)}(n)\in\mathbb{R}^{L\times d_h}$, \(\phi_{\mathrm{in}}:\mathbb{R}^{d_x}\rightarrow\mathbb{R}^{d_h}\) is applied row-wise, and \(d_h\in\mathbb{N}\) is the hidden embedding dimension.
Graph message passing is then performed over \(L_{\mathrm G}\) layers with
\begin{equation}
	\mathbf{H}_u^{(\ell+1)}(n)
	=
	\varphi^{(\ell)}
	\!\left(
	\widetilde{\mathbf{D}}_u^{-\frac{1}{2}}(n)
	\widetilde{\mathbf{A}}_u^{(L)}(n)
	\widetilde{\mathbf{D}}_u^{-\frac{1}{2}}(n)
	\mathbf{H}_u^{(\ell)}(n)\mathbf{W}^{(\ell)}
	\right),
	\label{eq:fc_gnn_layer}
\end{equation}
for \(\ell=0,1,\ldots,L_{\mathrm G}-1\), where \(\widetilde{\mathbf{A}}_u^{(L)}(n)
\triangleq
\mathbf{A}_u^{(L)}(n)+\mathbf{I}_L\), and
\(\mathbf{I}_L\) is the \(L\times L\) identity matrix,
\(\widetilde{\mathbf{D}}_u(n)\in\mathbb{R}^{L\times L}\) is the diagonal degree matrix associated with \(\widetilde{\mathbf{A}}_u^{(L)}(n)\),
\(\mathbf{W}^{(\ell)}\) is the trainable weight matrix of the \(\ell\)-th layer,
and \(\varphi^{(\ell)}(\cdot)\) is the corresponding nonlinear activation function.
The final graph-layer output is denoted by
\begin{equation}
	\mathbf{H}_u^{(L_{\mathrm G})}(n)
	=
	\begin{bmatrix}
		\mathbf{s}_{1,u}^T(n)\\
		\mathbf{s}_{2,u}^T(n)\\
		\vdots\\
		\mathbf{s}_{L,u}^T(n)
	\end{bmatrix},
	\label{eq:fc_final_node_embeddings}
\end{equation}
where \(\mathbf{s}_{i,u}(n)\in\mathbb{R}^{d_h}\) is the final embedding associated with the selected node \(\kappa_i\) and \(\mathbf{H}_u^{(L_{\mathrm G})}(n)
\in\mathbb{R}^{L\times d_h}\).

The FC aggregates the selected node embeddings via attention-based readout, where the attention logit of the \(i\)-th node is
\begin{equation}
	a_{i,u}(n)
	\triangleq
	\mathbf{v}_{\mathrm a}^T
	\tanh\,\bigl(
	\mathbf{W}_{\mathrm a}\mathbf{s}_{i,u}(n)
	\bigr)
	=\frac{
		\exp\,\bigl(a_{i,u}(n)\bigr)
	}{
		\sum_{j=1}^{L}\exp\,\bigl(a_{j,u}(n)\bigr)
	}
	,
	\label{eq:fc_attention_logit}
\end{equation}
where \(\mathbf{W}_{\mathrm a}\) and \(\mathbf{v}_{\mathrm a}\) are trainable attention parameters.
The graph-level fused representation is then obtained as
\begin{equation}
	\mathbf{F}_{\mathrm g}^{(u)}(n)
	\triangleq
	\sum_{i=1}^{L}
	\alpha_{i,u}(n)\mathbf{s}_{i,u}(n),
	\label{eq:fc_graph_representation}
\end{equation}
where \(\mathbf{F}_{\mathrm g}^{(u)}(n)\in\mathbb{R}^{d_h}\).
To preserve the geometric consistency information already introduced in
\eqref{eq:tri_score} and \eqref{eq:dist_score}, the FC augments the graph-level representation with the triangulation consistency score and the identity-location consistency score, which is represented by $
\bar{\mathbf{F}}_{\mathrm g}^{(u)}(n)\in\mathbb{R}^{d_h+2}$ and defined as 
\begin{equation}
	\bar{\mathbf{F}}_{\mathrm g}^{(u)}(n)
	\triangleq
	\Bigl[
	\bigl(\mathbf{F}_{\mathrm g}^{(u)}(n)\bigr)^T,\,
	f_{\mathrm{tri},u}(n),\,
	f_{\mathrm{dist},u,u^\star}(n)
	\Bigr]^T.
	\label{eq:fc_augmented_graph_representation}
\end{equation}

The final authentication score generated by the FC is \(S_{\mathrm{FC}}^{(u)}(n)
\triangleq
s_{\theta_{\mathrm{FC}}}\!\bigl(
\bar{\mathbf{F}}_{\mathrm g}^{(u)}(n)
\bigr)\),
where \(s_{\theta_{\mathrm{FC}}}\) is decision neural network.
Here, lower values of \(S_{\mathrm{FC}}^{(u)}(n)\) indicate stronger evidence of
legitimacy, whereas larger values indicate stronger evidence for rejection.
The overall FC-side parameter set is
\begin{equation}
	\theta_{\mathrm{FC}}
	\triangleq
	\left\{
	\theta_{\rho},\,
	\phi_{\mathrm{in}},\,
	\{\mathbf{W}^{(\ell)}\}_{\ell=0}^{L_{\mathrm G}-1},\,
	\mathbf{W}_{\mathrm a},\,
	\mathbf{v}_{\mathrm a},\,
	\mathbf{w}_{\mathrm o},\,
	b_{\mathrm o}
	\right\},
\end{equation}
where \(\mathbf{w}_{\mathrm o}\in\mathbb{R}^{d_h+2}\) and \(b_{\mathrm o}\in\mathbb{R}\).
Finally, the FC makes the authentication decision by
\begin{equation}
	D_{\mathrm{final}}^{(u)}(n)
	=
	\mathbf{1}\!\left\{
	S_{\mathrm{FC}}^{(u)}(n)>\tau_{\mathrm{FC}}
	\right\},
	\label{eq:fc_final_decision}
\end{equation}
where \(\tau_{\mathrm{FC}}\in(0,1)\) is the decision threshold. Moreover, \(D_{\mathrm{final}}^{(u)}(n)=0\) denotes acceptance as a legitimate transmission and \(D_{\mathrm{final}}^{(u)}(n)=1\) denotes rejection.

The proposed fusion strategy balances scalability and expressiveness by limiting graph construction and message passing to the \(L\) most reliable nodes, thereby reducing computational cost when \(|\mathcal{K}_u|\) is large. Meanwhile, the graph captures pairwise consistency in delay, Doppler, and geometry, enabling reliable authentication in dynamic NTN environments.

\subsection{Verification Score and ROC Metrics}

The receiver operating characteristic (ROC) curve shows the tradeoff between correctly accepting legitimate transmissions and wrongly accepting attacker transmissions as the threshold \(\tau_{\mathrm{FC}}\) changes. Here, the positive class corresponds to legitimate users. For a given \(\tau_{\mathrm{FC}}\), the true positive rate (TPR) and false positive rate (FPR) are defined as
\begin{align}
	\mathrm{TPR}(\tau_{\mathrm{FC}})
	&=
	\frac{
		TP(\tau_{\mathrm{FC}})
	}{
		TP(\tau_{\mathrm{FC}})+FN(\tau_{\mathrm{FC}})
	},
	\label{eq:tpr}
	\\
	\mathrm{FPR}(\tau_{\mathrm{FC}})
	&=
	\frac{
		FP(\tau_{\mathrm{FC}})
	}{
		FP(\tau_{\mathrm{FC}})+TN(\tau_{\mathrm{FC}})
	}.
	\label{eq:fpr}
\end{align}
Here, \(TP(\tau_{\mathrm{FC}})\) is the number of legitimate transmissions correctly accepted \((D_{\mathrm{final}}^{(u)}(n)=0)\), \(FN(\tau_{\mathrm{FC}})\) is the number of legitimate transmissions wrongly rejected \((D_{\mathrm{final}}^{(u)}(n)=1)\), \(FP(\tau_{\mathrm{FC}})\) is the number of attacker transmissions wrongly accepted \((D_{\mathrm{final}}^{(u)}(n)=0)\), and \(TN(\tau_{\mathrm{FC}})\) is the number of attacker transmissions correctly rejected \((D_{\mathrm{final}}^{(u)}(n)=1)\).
The area under the ROC curve (AUC) is defined as
\begin{align}
	\mathrm{AUC}
	=&
	\int_{0}^{1}
	\mathrm{TPR}(\mathrm{FPR})\,d\mathrm{FPR} \nonumber \\
	=&
	\int
	\mathrm{TPR}(\tau_{\mathrm{FC}})
	\left|
	\frac{d\,\mathrm{FPR}(\tau_{\mathrm{FC}})}
	{d\tau_{\mathrm{FC}}}
	\right|
	d\tau_{\mathrm{FC}},
	\label{eq:auc_standard}
\end{align}
where the absolute value ensures a nonnegative area under the ROC curve.
Under the convention in \eqref{eq:fc_final_decision}, increasing
\(\tau_{\mathrm{FC}}\) generally makes acceptance less strict, since a
transmission is accepted when
\(S_{\mathrm{FC}}^{(u)}(n)\leq \tau_{\mathrm{FC}}\).

\subsection{Summary of the Algorithms}
\label{sec:solution}

As shown in Fig.~\ref{fig:flow_chart}, the proposed framework consists of four stages: data collection, causal estimation, meta-learning, and authentication. First, NTN observations are collected across environments and nodes, including variables, treatments, measurements, features, and labels. Second, the causal module estimates interventional quantities such as DR means, ACE, and CATE. Third, the meta-learning module optimizes $\phi$ using MAML with causal-invariance and DR-consistency regularization to improve cross-environment generalization. Finally, the trained model produces the FC authentication score and accept/reject decision, while the graph-based attack detection module is activated if the claimed identity is not legitimate.

\begin{figure}[t]
	\centering
	\includegraphics[width=0.98\linewidth]{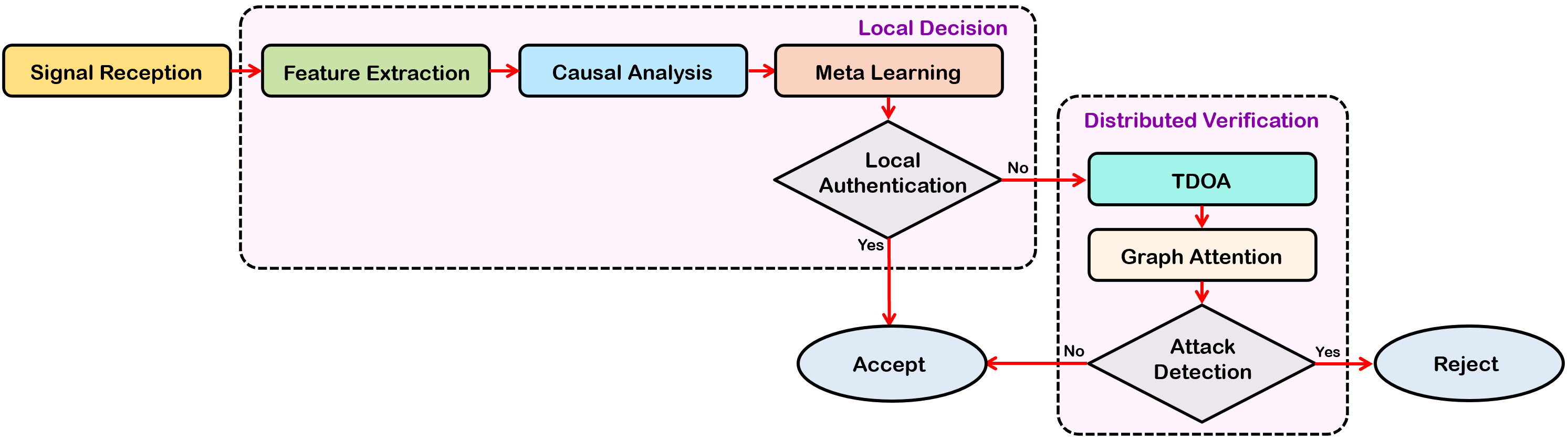}
	\caption{Summary of the proposed method and algorithms.}
	\label{fig:flow_chart}
\end{figure}

Algorithms~\ref{alg:causal} and \ref{alg:metalearning} summarize the causal estimation and meta-learning procedures, respectively. The causal estimates guide training through invariance and DR consistency regularizers, encouraging the model to exploit stable causal relationships rather than environment-specific correlations. The final model is evaluated on unseen NTN environments.

\begin{algorithm}[t]
	\caption{Causal Effect Estimation}
	\label{alg:causal}
	\footnotesize
	\begin{algorithmic}[1]
		\Require
		\(\mathcal{T}\),
		\(
		\mathcal{D}_{\mathrm{causal}}
		=
		\left\{
		\left(
		Z_{k,u}^{(i)},
		T^{(i)},
		\mathbf{f}_{k,u}^{(i)},
		Y_{k,u}^{(i)}
		\right)
		\right\}_{i=1}^{N_s}
		\)
		\Ensure
		\(\widehat{\mu}^{\mathrm{DR}}(t)\),
		\(\widehat{\mathrm{ACE}}(t_1,t_0)\), and
		\(\widehat{\mathrm{CATE}}(t_1,t_0\mid z_{k,u})\)
		
		\State Estimate propensity scores \(e_t(z)\) via \eqref{eq:propensity}, \(\forall t\in\mathcal{T}\).
		\State Estimate outcome models \(m_t(z)\) via \eqref{eq:outcome}, \(\forall t\in\mathcal{T}\).
		\State Compute DR means \(\widehat{\mu}^{\mathrm{DR}}(t)\) via \eqref{eq:dr}, \(\forall t\in\mathcal{T}\).
		\State Compute ACEs \(\widehat{\mathrm{ACE}}(t_1,t_0)\) via \eqref{eq:ace}, \(\forall (t_1,t_0)\in\mathcal{T}^2\).
		\State Compute \(\widehat{\mathrm{CATE}}(t_1,t_0\mid z_{k,u})\) via \eqref{eq:cate}, 
		\(\forall (t_1,t_0)\in\mathcal{T}^2\) and \(z_{k,u}\).
		\State \Return
		\(\{\widehat{\mu}^{\mathrm{DR}}(t)\}_{t\in\mathcal{T}}\),
		\(\{\widehat{\mathrm{ACE}}(t_1,t_0)\}_{t_0,t_1\in\mathcal{T}}\), \\ \quad and
		\(\{\widehat{\mathrm{CATE}}(t_1,t_0\mid z_{k,u})\}_{t_0,t_1\in\mathcal{T},\, z_{k,u}\in Z_{k,u}}\).
	\end{algorithmic}
\end{algorithm}

\begin{algorithm}[t]
	\caption{Meta-Learning with Causal Invariance}
	\label{alg:metalearning}
	\footnotesize
	\begin{algorithmic}[1]
		\Require
		\(\{D_{\mathrm{train}}^{e,k},D_{\mathrm{val}}^{e,k}\}\),
		\(\alpha,\beta\),
		\(\lambda_{\mathrm{IRM}},\lambda_{\mathrm{DR}}\),
		and outputs from Algorithm~\ref{alg:causal}
		\Ensure Learned initialization \(\phi^\star\)
		
		\State Initialize \(\phi\) randomly.
		
		\While{not converged}
		\State Compute task-adapted parameters \(\phi'_{e,k}\) using \eqref{eq:maml_inner}, 
		\(\forall e\in\mathcal{E}, k\in\mathcal{K}\).
		
		\State Evaluate validation losses 
		\(\mathcal{L}_{\mathrm{task}}^{e,k}(\phi'_{e,k};D_{\mathrm{val}}^{e,k})\), 
		\(\forall e\in\mathcal{E}, k\in\mathcal{K}\).
		
		\State Compute \(\mathcal{R}_{\mathrm{IRM}}(\phi)\) and 
		\(\mathcal{R}_{\mathrm{DR}}(\phi)\) using \eqref{eq:irm} and \eqref{eq:dr_reg}.
		
		\State Update \(\phi\) using the outer update in \eqref{eq:maml_outer}.
		\EndWhile
		
		\State \Return \(\phi^\star \leftarrow \phi\).
	\end{algorithmic}
\end{algorithm}

\section{Computational Complexity Analysis}
\label{sec:complexity}

We analyze the computational complexity of the proposed framework for both offline training and online inference. 

For each node \(k\), feature extraction in \eqref{eq:f1_new}--\eqref{eq:f8_new} is dominated by FFT operations and subspace projection, resulting in per-sample complexity
\(
\mathcal{O}(M_k \log M_k + M_k r_k)
\). The local Mahalanobis scoring has negligible complexity compared with feature extraction. Hence, the total local-processing complexity over \(N_s\) samples is \(\mathcal{O}\!\left(
N_s
\sum_{k=1}^{|\mathcal{K}|}
(M_k \log M_k + M_k r_k)
\right)\).

Let \(M \triangleq \max_k M_k\) denote the maximum antenna number.
Under practical LoS-dominant NTN conditions with \(r_k=\mathcal{O}(1)\), this reduces to
\(
\mathcal{O}(N_s |\mathcal{K}| M \log M)
\),
while in rich scattering it becomes
\(
\mathcal{O}(N_s |\mathcal{K}| M^2)
\).

The causal estimation module fits the propensity and outcome models over a dataset of size \(N_s\). Since \(|\mathcal{T}|\) is fixed and small, the overall causal-estimation complexity scales as
\(
\mathcal{O}(N_s |\mathcal{T}|)
\).
For meta-learning, let \(C_{\mathrm{ML}}\) denote the complexity of one forward-backward pass, where
\(
C_{\mathrm{ML}}=\mathcal{O}(d_x d_h + d_h^2)
\).
The total training complexity over \(n_{\mathrm{iter}}\) outer iterations becomes \(\mathcal{O}\!\left(
n_{\mathrm{iter}} N_s
(C_{\mathrm{ML}}+|\mathcal{T}|)
\right)\).
If the feature extractor is trained end-to-end, the local feature-extraction cost is additionally included.
Since \(d_x\), \(d_h\), and \(|\mathcal{T}|\) remain bounded in practical implementations, the meta-learning and causal-estimation terms scale approximately linearly with \(n_{\mathrm{iter}} N_s\).

For one authentication attempt, each node performs channel estimation, feature extraction, and local scoring with complexity
\(
\mathcal{O}(M_k \log M_k + M_k r_k)
\).
If distributed verification is activated, Top-\(L\) node selection requires
\(
\mathcal{O}(|\mathcal{K}_u| \log |\mathcal{K}_u|)
\),
while graph construction and message passing require
\(
\mathcal{O}(L^2)
\)
and
\(
\mathcal{O}(L_{\mathrm G} L^2 d_h)
\),
respectively. Optional TDOA localization adds
\(
\mathcal{O}(L^3)
\).
However, since \(L\), \(L_{\mathrm G}\), and \(d_h\) are fixed small values in our implementation, and TDOA is only triggered when local authentication fails, the dominant online complexity is \(\mathcal{O}\!\left(
\sum_{k \in \mathcal{K}_u}
(M_k \log M_k + M_k r_k)
\right)\).
For homogeneous node dimensions \(M_k \approx M\) and bounded \(r_k\), this reduces to
\(
\mathcal{O}(|\mathcal{K}_u| M \log M)
\).

The local-only baseline employs a conventional deep learning-based authentication model using the same local feature extraction pipeline, but without causal estimation, meta-learning adaptation, graph-based fusion, or TDOA-based attack detection. It still performs local feature extraction and local scoring. Thus, its offline complexity is approximately
\(
\mathcal{O}(N_s |\mathcal{K}| M \log M)
\),
while its online complexity is
\(
\mathcal{O}(|\mathcal{K}_u| M \log M)
\).
Under LoS-dominant NTN conditions, the proposed framework has simplified offline complexity \(\mathcal{O}\!\left(
N_s |\mathcal{K}| M \log M
+
n_{\mathrm{iter}} N_s
\right)\),
and dominant online complexity
\(
\mathcal{O}(|\mathcal{K}_u| M \log M)
\).
Thus, the proposed framework introduces additional offline computation for causal estimation and meta-learning, while the online cost is dominated by local processing and remains scalable.

Table~\ref{tab:complexity_summary} summarizes the dominant computational complexity comparison. The online expressions reported in the table correspond to the dominant terms after treating \(L\), \(L_{\mathrm G}\), \(d_h\), and conditional TDOA localization as lower-order contributions.
\begin{table}[t]
	\centering
	\caption{Final computational complexity comparison.}
	\label{tab:complexity_summary}
	\scriptsize
	\setlength{\tabcolsep}{3pt}
	\renewcommand{\arraystretch}{1.2}
	\begin{tabular}{ll}
		\hline
		\textbf{Scheme} & \textbf{Complexity} \\
		\hline
		
		\textbf{Proposed}
		& \textbf{Offline:} \(\mathcal{O}\!\left(N_s |\mathcal{K}| M \log M + n_{\mathrm{iter}} N_s\right)\) \\
		& \textbf{Online:} \(\mathcal{O}\!\left(|\mathcal{K}_u| M \log M\right)\) \\
		
		\hline
		
		\textbf{Local-only}
		& \textbf{Offline:} \(\mathcal{O}\!\left(N_s |\mathcal{K}| M \log M\right)\) \\
		& \textbf{Online:} \(\mathcal{O}\!\left(|\mathcal{K}_u| M \log M\right)\) \\
		
		\hline
		
		\textbf{Typical parameter ranges}
		& \begin{tabular}[t]{l}
			\(N_s \approx 10^3\!-\!10^4,\; |\mathcal{K}| \approx 2\!-\!20\) \\
			\(|\mathcal{K}_u| \approx 2\!-\!10,\; M \approx 32\!-\!256\) \\
			\(n_{\mathrm{iter}} \approx 10\!-\!50\)
		\end{tabular} \\
		
		\hline
	\end{tabular}
\end{table}

\section{Simulation Results}
\label{sec:results}

We first generate a multi-environment dataset and train the model. Then, we test it under unseen conditions and analyze the features of the proposed multi-feature fingerprint.

\subsection{Simulation Setup}

\subsubsection{Data Generation}

We generate a dataset covering HAPS and satellite scenarios. Six scenarios are considered: five for training and one for testing. Each scenario includes ten users with IDs \(1\) to \(10\), and each user performs ten movements. The objective is user-ID recognition from the extracted fingerprint features.
The uplink channel follows a Rician fading model with carrier frequency \(f_c=2\)~GHz. For each task, the environmental variables are sampled from the ranges in Table~\ref{tab:env_sweep}.
We generate \(10\) samples per user per scenario. Thus, the first-phase training set contains \(500\) samples, i.e., \(5\) scenarios \(\times\) \(10\) users \(\times\) \(10\) samples, while the test set contains \(100\) samples, i.e., \(1\) scenario \(\times\) \(10\) users \(\times\) \(10\) samples. The physical feature vector \(\mathbf{f}_{k,u}(n)\) is computed using \eqref{eq:f1_new}--\eqref{eq:f8_new}.
Channel estimation includes AWGN with power \(\sigma_k^2=10^{-12}\)~W, TOA noise with standard deviation \(\sigma_{\mathrm{toa}}=5\)~ns, and Doppler noise with standard deviation \(0.5\)~Hz. Legitimate users and attackers are located within a \(50\times50\)~km ground area centered at the origin.
For each environment, \(N_{\mathrm{enr}}=100\) enrollment snapshots per legitimate user are collected to estimate \(\boldsymbol{\mu}_h^{(k,u)}\), \(\mathbf{R}_h^{(k,u)}\), and \(\mathbf{P}_{\mathrm{ref}}^{(k,u)}\). The reference subspace uses the two dominant eigenvectors, i.e., \(r_k=2\). The reference Doppler and delay values are denoted by \(f_{\mathrm{d,ref}}^{(k,u)}\) and \(\tau_{k,u}^{\mathrm{(ref)}}\), respectively.

\subsubsection{Causal Meta-Learning for Local Authentication}

The causal meta-learning framework in Algorithms~\ref{alg:causal} and~\ref{alg:metalearning} is trained on a set of \(|\mathcal{E}| = 5\) distinct environments, corresponding to the scenarios in the data generation code. Each environment contains \(10\) users (either legitimate or attacker). The feature dimension is \(d_f = 8\). For each task, we randomly sample \(5\) support samples and \(5\) query samples per user in a few-shot setting. The inner learning rate is \(\alpha = 0.01\) with \(5\) inner gradient steps. The outer learning rate is \(\beta = 0.001\). The meta-learner is trained for \(n_{\mathrm{iter}} = 160\) outer epochs, with \(20\) episodes per epoch. 
The hyperparameters \(\lambda_{\mathrm{IRM}}\) and \(\lambda_{\mathrm{DR}}\) are set to \(0.1\) and \(0.05\), respectively. 
The neural network \(\Phi_{\phi}\), used as the feature extractor is a \(4\)-layer MLP with hidden size \(64\) and output dimension \(d_f\), one output per feature. 

\subsubsection{Distributed Verification and Attack Detection}

For the second-phase experiments, i.e., distributed verification using TDOA and location-consistency scores, we consider \(|\mathcal{K}| = 4\) nodes located at \((\pm50,\pm50)\)~km horizontally. Each node is placed at altitude \(20\)~km, has \(M_k = 16\) antennas, and moves with velocity \(v_k=15\)~m/s, as used in the verification code. The legitimate user and the attacker are placed on the ground within the \((\pm50,\pm50)\)~km region.
The training set consists of \(160\) samples for legitimate users and \(160\) samples for attackers. The test set contains \(100\) samples for each class. 
The TDOA estimation uses \(10\) consecutive time steps. The position-stability score uses \(\sigma_{\mathrm{pos}} = 5\)~m, and the distance-consistency score uses \(\sigma_{\mathrm{dist}} = 500\)~m. 

The FC classifier is a two-layer MLP with hidden units \((64,32)\) and a sigmoid output. It is trained with the Adam optimizer for \(1000\) epochs, without early stopping. The final decision threshold is \(\tau_{\mathrm{FC}} = 0.5\). 
All results are averaged over \(n_{\mathrm{MC}} = 10\) independent Monte Carlo runs. The SNR sweep varies the transmit power from \(10^{-4}\) to \(10^{2}\)~W, corresponding to an SNR range from \(-10\) to \(30\)~dB.

\subsubsection{Test Scenario}

In the final test, the system is evaluated on unseen environment combinations. The meta-learner is adapted to the new environment using \(10\) support samples per user via the inner update and tested on \(100\) query samples. In distributed verification, the FC collects local soft scores \(p_{k,u,u_k^\star}(n)\) and TDOA measurements \(\hat{\tau}_{k,u}(n)\) from all nodes. The reported metrics are AUC and authentication accuracy.

\begin{table}[t]
	\centering
	\caption{Ranges of environment variables for task generation.}
	\label{tab:env_sweep}
	\begin{tabular}{@{}l c l@{}}
		\toprule
		\textbf{Parameter} & \textbf{Symbol} & \textbf{Range / Values} \\
		\midrule
		Number of antennas & \(M_k\) & \(\{16,\; 64,\; 128,\; 256\}\) \\
		SNR at node \(k\) & \(\gamma_{k,u}\) & \(\{-10,\,-5,\dots,30\}\)~dB \\
		Angular spread & \(\sigma_{\theta,k}\) & \(\{2^\circ,\;5^\circ,\;10^\circ\}\) \\
		Rician factor & \(K_{\mathrm r}^{(k,u)}\) & \(\{5,\;10,\;20\}\) \\
		ADC resolution & \(N_{\mathrm{bit}}^{(k)}\) & \(\{4,\;6,\;8\}\) bits \\
		Pilot reuse groups & \(K_{\mathrm p}^{(k)}\) & \(\{2,\;3,\;4\}\) \\
		Elevation angle & \(\theta_{\mathrm{elev}}^{(k,u)}\) & \(\{30^\circ,\;60^\circ,\;90^\circ\}\) \\
		\addlinespace
		\multicolumn{3}{c}{\textbf{HAPS-specific}} \\
		HAPS altitude & \(H_{\mathrm{alt}}^{(k)}\) & \(\{20,\;25\}\) km \\
		HAPS velocity & \(v_k\) & \(\{15,\;20\}\) m/s \\
		\addlinespace
		\multicolumn{3}{c}{\textbf{LEO/GEO-specific}} \\
		Satellite altitude & \(H_{\mathrm{alt}}^{(k)}\) & \(\{500,\;1200,\;2000,\;35786\}\) km \\
		Satellite velocity & \(v_k\) & \(\{7200,\;7500,\;7800,\;3070\}\) m/s \\
		\bottomrule
	\end{tabular}
\end{table}

\subsection{Multi-Feature Comparison and Feature Importance}

Fig.~\ref{fig:roc_features} compares the AUC of different methods. The proposed method achieves the highest AUC by using all eight features, learning the causal relationships, and adapting to new environments via meta-learning. In contrast, the deep-learning baseline may overfit environment-specific patterns and perform worse in unseen NTN conditions.

\begin{figure}[t]
	\centering
	\includegraphics[width=0.8\columnwidth]{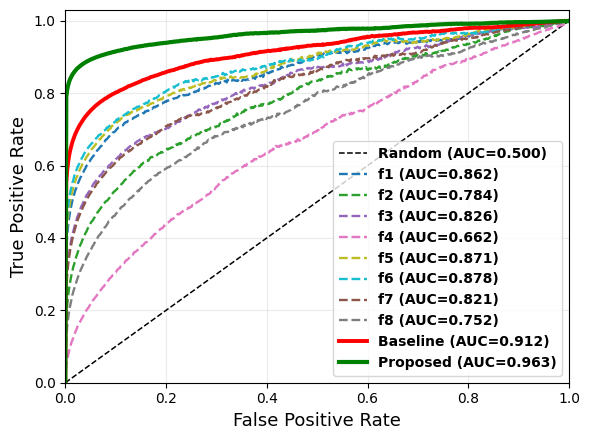}
	\caption{Average ROC curves for single-feature classifiers, baseline fusion, and the proposed all-feature fusion method across the considered NTN scenarios.}
	\label{fig:roc_features}
\end{figure}

The relevance of PLA features depends on the NTN scenario, link geometry, interference level, and channel estimation quality. Overall, geometry- and spatial-domain features, i.e., \(f_{k,u,1}\), \(f_{k,u,5}\), and \(f_{k,u,6}\), are generally reliable across most NTN deployments, whereas interference-aware features, such as \(f_{k,u,7}\) and \(f_{k,u,8}\), become increasingly important in dense multi-user settings with IUI or pilot contamination.

Feature importance varies with propagation conditions, platform type, system parameters, and attacker capability. In sparse LoS-dominant scenarios, angular, subspace, and geometry-dependent features, i.e., \(f_{k,u,1}\), \(f_{k,u,2}\), \(f_{k,u,5}\), and \(f_{k,u,6}\), are typically the most discriminative, whereas removing \(f_{k,u,7}\) or \(f_{k,u,8}\) causes only limited degradation due to weak interference. In dense NTN access scenarios, post-combining and interference-aware features become more important; hence, \(f_{k,u,7}\), \(f_{k,u,8}\), \(f_{k,u,3}\), and \(f_{k,u,5}\) are more influential, while the snapshot-dependent projection feature \(f_{k,u,4}\) is usually less robust under noise and interference.

The dominant features also depend on the NTN platform. For LEO links, high mobility increases the importance of the delay--Doppler feature \(f_{k,u,6}\), together with spatial features \(f_{k,u,1}\), \(f_{k,u,3}\), and \(f_{k,u,5}\). For GEO links, the weaker Doppler effect makes \(f_{k,u,1}\), \(f_{k,u,2}\), and \(f_{k,u,5}\) dominant, although the delay component of \(f_{k,u,6}\) can still be useful. For HAPS links, stable geometry and shorter propagation distance make angular and subspace features, particularly \(f_{k,u,1}\), \(f_{k,u,2}\), and \(f_{k,u,5}\), highly informative.

System conditions further affect the ranking. High SNR and large antenna arrays improve channel estimation and angular resolution, strengthening \(f_{k,u,1}\), \(f_{k,u,2}\), and \(f_{k,u,5}\). At low SNR, with few antennas, strong NLoS propagation, large angular spread, or pilot contamination, instantaneous and beam-domain features become less stable, while \(f_{k,u,3}\), \(f_{k,u,5}\), \(f_{k,u,6}\), \(f_{k,u,7}\), and \(f_{k,u,8}\) provide more robust evidence.

Attacker location and capability also influence feature discrimination. When the attacker is far from the legitimate user, \(f_{k,u,6}\) is highly effective due to distinct geometric signatures. However, for a nearby or more capable attacker that partially mimics delay--Doppler characteristics, authentication relies more on spatial consistency, combiner similarity, subspace deviation, and post-combining SINR, represented by \(f_{k,u,1}\), \(f_{k,u,3}\), \(f_{k,u,5}\), and \(f_{k,u,7}\), respectively.

Overall, a qualitative ranking for typical NTN scenarios is
\begin{align}
	&f_{k,u,6}
	\approx
	f_{k,u,5}
	\approx
	f_{k,u,1}
	>
	f_{k,u,3}
	\approx
	f_{k,u,7}
	>
	f_{k,u,2} \nonumber \\
	&\quad>
	f_{k,u,8}
	>
	f_{k,u,4}.
	\label{eq:general_feature_ranking}
\end{align}
In dense and interference-limited scenarios, the ranking shifts toward interference-aware and combiner-based features as
\begin{align}
	&f_{k,u,7}
	\approx
	f_{k,u,8}
	\approx
	f_{k,u,3}
	>
	f_{k,u,5}
	\approx
	f_{k,u,6}
	\approx
	f_{k,u,1} \nonumber \\
	&\quad>
	f_{k,u,2}
	>
	f_{k,u,4}.
	\label{eq:dense_feature_ranking}
\end{align}
In sparse LoS-dominant environments, the ranking becomes
\begin{align}
	&f_{k,u,1}
	\approx
	f_{k,u,5}
	\approx
	f_{k,u,6}
	\approx
	f_{k,u,2}
	>
	f_{k,u,3}
	>
	f_{k,u,7} \nonumber \\
	&\quad>
	f_{k,u,8}
	\approx
	f_{k,u,4}.
	\label{eq:sparse_feature_ranking}
\end{align}

\begin{table*}[!t]
	\centering
	\caption{Qualitative importance of the considered physical-layer authentication features under different NTN environments.}
	\label{tab:feature_importance_ntn}
	\footnotesize
	\renewcommand{\arraystretch}{1.05}
	\setlength{\tabcolsep}{3.6pt}
	\begin{tabular}{|p{2.3cm}|p{1.4cm}|p{1.35cm}|p{1.4cm}|p{1.35cm}|p{1.55cm}|p{6.5cm}|}
		\hline
		\textbf{Feature Name (Notation)} & \textbf{Typical NTN} & \textbf{Dense Access} & \textbf{Sparse LoS} & \textbf{LEO} & \textbf{GEO/HAPS} & \textbf{Main Interpretation} \\
		\hline
		Channel Similarity ($f_{k,u,1}$) & High & Medium--High & Very High & High & Very High & Captures spatial consistency with the enrolled user. \\
		\hline
		Main-Lobe Energy ($f_{k,u,2}$) & Medium & Low--Medium & High & Medium & High & Useful when dominant angular support is stable; weaker under scattering, beam spread, or contamination. \\
		\hline
		Combiner Similarity ($f_{k,u,3}$) & Medium--High & High & Medium & High & Medium & More important when spatial filtering consistency matters, especially in dense interference-limited settings. \\
		\hline
		Projection Score ($f_{k,u,4}$) & Low & Low & Low--Medium & Low & Low & Usually the least robust feature due to sensitivity to noise, interference, and snapshot variability. \\
		\hline
		Subspace Deviation ($f_{k,u,5}$) & Very High & High & Very High & High & Very High & One of the most informative features, reflecting deviation from the enrolled spatial structure. \\
		\hline
		Delay--Doppler ($f_{k,u,6}$) & Very High & High & High & Very High & Medium & Strong when geometry and motion are discriminative; particularly important for LEO links. \\
		\hline
		SINR ($f_{k,u,7}$) & High & Very High & Low--Medium & High & Medium & Critical in dense access and strong-IUI conditions because it reflects user separability after combining. \\
		\hline
		INR ($f_{k,u,8}$) & Medium & Very High & Low & Medium & Low--Medium & Mainly useful in interference-heavy scenarios where illegitimate activity alters the interference level. \\
		\hline
	\end{tabular}
\end{table*}

\begin{table*}[!t]
	\centering
	\caption{Scenario-dependent qualitative feature importance for physical-layer authentication in NTN environments.}
	\label{tab:scenario_feature_ranking}
	\footnotesize
	\renewcommand{\arraystretch}{1.05}
	\setlength{\tabcolsep}{3.4pt}
	\begin{tabular}{|p{3.1cm}|p{1.75cm}|p{2.55cm}|p{1.80cm}|p{1.80cm}|p{5.3cm}|}
		\hline
		\textbf{Scenario} & \textbf{Primary} & \textbf{Secondary} & \textbf{Less} & \textbf{Least} & \textbf{Notes} \\
		\hline
		Typical NTN Environment & $f_{k,u,6}$, $f_{k,u,5}$, $f_{k,u,1}$ & $f_{k,u,3}$, $f_{k,u,7}$ & $f_{k,u,2}$, $f_{k,u,8}$ & $f_{k,u,4}$ & Geometry consistency and spatial structure are usually the strongest authentication cues. \\
		\hline
		Dense Multi-User Access & $f_{k,u,7}$, $f_{k,u,8}$, $f_{k,u,3}$ & $f_{k,u,5}$, $f_{k,u,6}$, $f_{k,u,1}$ & $f_{k,u,2}$ & $f_{k,u,4}$ & Interference-aware features dominate under strong IUI and pilot contamination.
		\\
		\hline
		Sparse LoS-Dominant & $f_{k,u,1}$, $f_{k,u,5}$, $f_{k,u,6}$, $f_{k,u,2}$ & $f_{k,u,3}$ & $f_{k,u,7}$ & $f_{k,u,4}$, $f_{k,u,8}$ & Spatial, angular, and geometric consistency are highly discriminative under weak interference.
		\\
		\hline
		LEO Links & $f_{k,u,6}$, $f_{k,u,1}$ & $f_{k,u,3}$, $f_{k,u,5}$ & $f_{k,u,7}$, $f_{k,u,2}$ & $f_{k,u,4}$ & High mobility enhances the value of delay--Doppler cues. \\
		\hline
		GEO/HAPS Links & $f_{k,u,1}$, $f_{k,u,2}$, $f_{k,u,5}$ & $f_{k,u,6}$, $f_{k,u,3}$ & $f_{k,u,7}$, $f_{k,u,8}$ & $f_{k,u,4}$ & Stable geometry reduces Doppler relevance and emphasizes spatial consistency. \\
		\hline
		Low-SNR & $f_{k,u,6}$, $f_{k,u,7}$ & $f_{k,u,5}$, $f_{k,u,3}$ & $f_{k,u,1}$, $f_{k,u,2}$ & $f_{k,u,4}$ & Spatial features degrade, while geometric and interference-aware cues become more useful. \\
		\hline
	\end{tabular}
\end{table*}

\begin{table}[!t]
	\centering
	\caption{Categorization of the considered physical-layer features.}
	\label{tab:feature_categories}
	\footnotesize
	\renewcommand{\arraystretch}{1.05}
	\setlength{\tabcolsep}{4.2pt}
	\begin{tabular}{p{4.5cm} p{3.0cm}}
		\toprule
		\textbf{Category} & \textbf{Feature Notation(s)} \\
		\midrule
		Spatial / Angular Consistency & $f_{k,u,1}$, $f_{k,u,2}$ \\
		Combiner / Receiver-Side Consistency & $f_{k,u,3}$, $f_{k,u,4}$ \\
		Subspace Structure Deviation & $f_{k,u,5}$ \\
		Geometry / Kinematics & $f_{k,u,6}$ \\
		Interference-Aware & $f_{k,u,7}$, $f_{k,u,8}$ \\
		\bottomrule
	\end{tabular}
\end{table}

These rankings complement empirical feature-importance analysis and show feature roles under different NTN conditions. Tables~\ref{tab:feature_importance_ntn} and~\ref{tab:scenario_feature_ranking} summarize them from feature- and scenario-centric perspectives.
Table~\ref{tab:feature_categories} groups the features into spatial, combiner-based, subspace-based, geometry-based, and interference-aware categories for easier interpretation.

Some redundancy exists among the proposed features. Specifically, \(f_{k,u,1}\), \(f_{k,u,2}\), and \(f_{k,u,5}\) all characterize spatial consistency and may therefore be correlated, particularly in strong LoS conditions with small angular spread. Similarly, \(f_{k,u,7}\) and \(f_{k,u,8}\) are both interference-related; however, \(f_{k,u,7}\) is usually more informative because it reflects the effective SINR, whereas \(f_{k,u,8}\) only measures interference relative to noise and is mainly useful in dense access scenarios.

To quantitatively validate the above observations, we perform a leave-one-feature-out ablation study.
Let \(\mathrm{AUC}_{\mathrm{all}}\) denote the AUC obtained using the complete features and let \(\mathrm{AUC}_{-f_i}\) denote the AUC obtained after removing the \(i\)-th feature. The contribution of feature \(f_i\) is derived as \(\Delta \mathrm{AUC}_i
=
\mathrm{AUC}_{\mathrm{all}}
-
\mathrm{AUC}_{-f_i}\).
A large value of \(\Delta \mathrm{AUC}_i\) indicates that the removed feature is important for authentication, whereas a value close to zero indicates that the feature has limited marginal contribution under the considered environment.

\begin{figure}[t]
	\centering
	\captionsetup[subfloat]{font=footnotesize}
	
	\subfloat[]{
		\includegraphics[width=0.46\columnwidth]{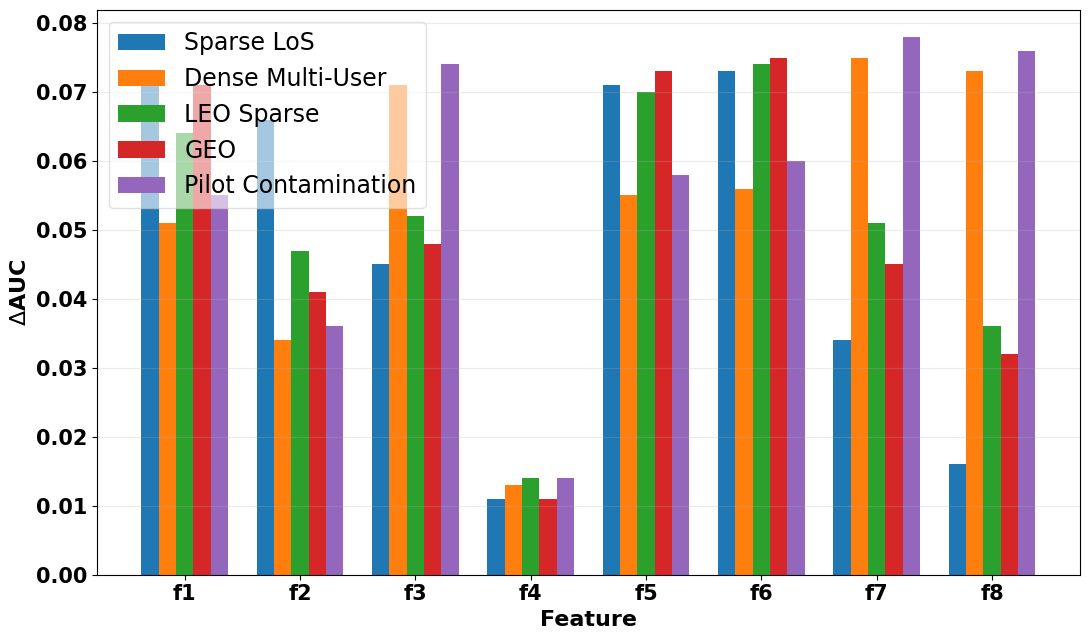}
		\label{fig:mil_auc}
	}
	\hfill
	\subfloat[]{
		\includegraphics[width=0.46\columnwidth]{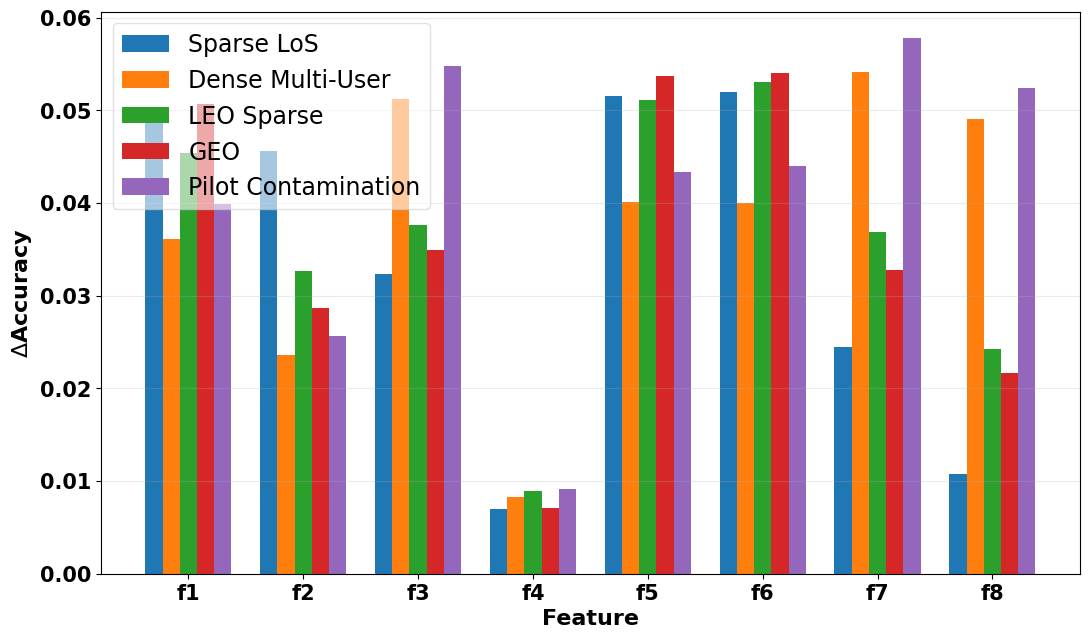}
		\label{fig:mil_acc}
	}
	
	\caption{Impact of each feature across scenarios: (a) AUC and (b) accuracy.}
\end{figure}

Figs.~\ref{fig:mil_auc} and \ref{fig:mil_acc} show environment-dependent feature importance via leave-one-feature-out AUC and accuracy degradation in NTN scenarios. In sparse LoS settings, removing \(f_{k,u,1}\), \(f_{k,u,5}\), or \(f_{k,u,6}\) leads to the largest degradation, highlighting the role of stable geometry and spatial structure. In dense or pilot-contaminated scenarios, \(f_{k,u,3}\), \(f_{k,u,7}\), and \(f_{k,u,8}\) become more influential due to post-combining separability and interference effects. The results also show that \(f_{k,u,6}\) dominates in sparse LEO links because of strong Doppler signatures, while GEO scenarios depend more on spatial-domain features.
The ablation results show no universally dominant feature. This motivates causal meta-learning over the full feature set.
Figs.~\ref{fig:auc_ave} and~\ref{fig:acc_ave} report the average \(\Delta \mathrm{AUC}_i\) and accuracy variations across NTN conditions. \(f_{k,u,4}\) is generally the least influential, while \(f_{k,u,6}\) is more relevant in LEO, low-SNR, and far-attacker scenarios. The interference-aware features \(f_{k,u,7}\) and \(f_{k,u,8}\) are most useful in dense, wide-angular-spread, and pilot-contaminated environments, whereas \(f_{k,u,1}\) and \(f_{k,u,5}\) remain strong in sparse LoS, GEO, and large-array settings.
Figs.~\ref{fig:heat_auc} and~\ref{fig:heat_acc} further confirm spatial dominance in sparse LoS and interference dominance in dense scenarios.

\begin{figure}[t]
	\centering
	\captionsetup[subfloat]{font=footnotesize}
	
	\subfloat[]{
		\includegraphics[width=0.46\columnwidth]{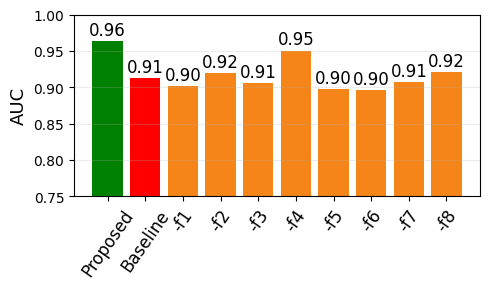}
		\label{fig:auc_ave}
	}
	\hfill
	\subfloat[]{
		\includegraphics[width=0.46\columnwidth]{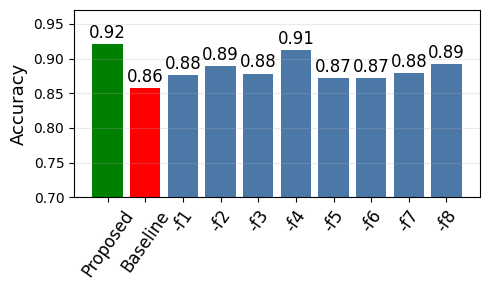}
		\label{fig:acc_ave}
	}
	
	\subfloat[]{
		\includegraphics[width=0.46\columnwidth]{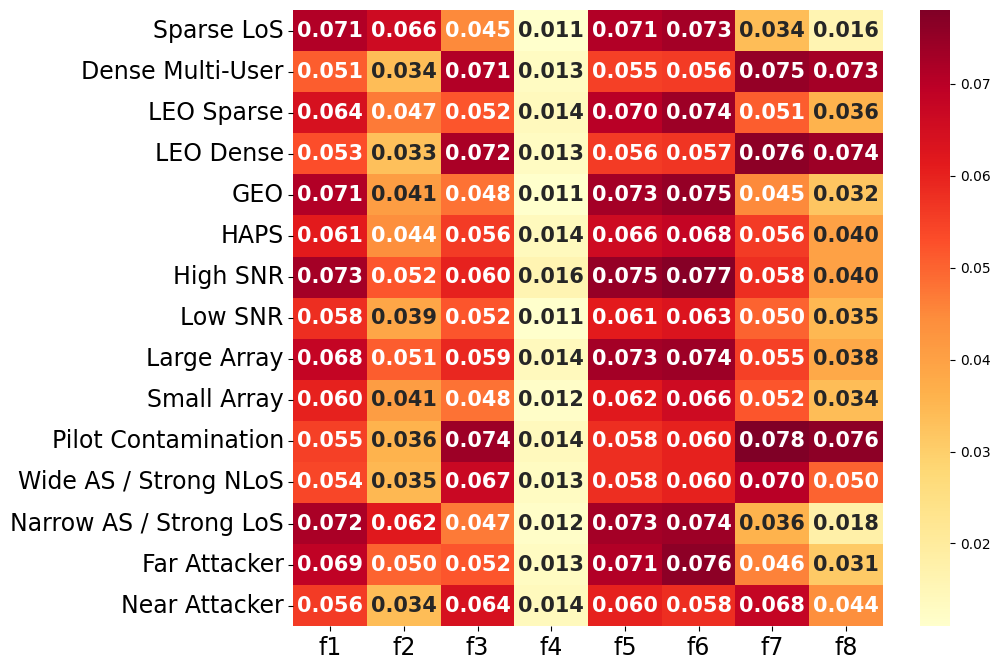}
		\label{fig:heat_auc}
	}
	\hfill
	\subfloat[]{
		\includegraphics[width=0.46\columnwidth]{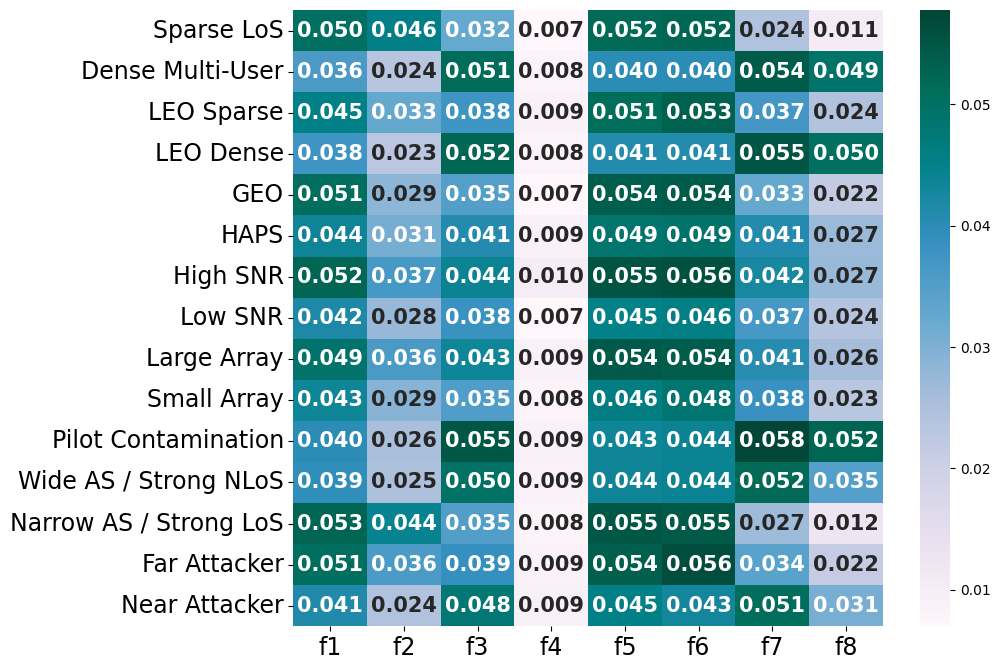}
		\label{fig:heat_acc}
	}
	
	\caption{AUC and accuracy analyses: (a),(b) feature-removal impacts on AUC and accuracy; (c),(d) feature-importance heatmaps for AUC and accuracy.}
	\label{fig:combined_four}
\end{figure}

\section{Conclusion}
\label{sec:conclusion}

This paper proposed SAFA-MZ, a causal meta-learning framework for DPLA in NTNs under pilot contamination. SAFA-MZ builds an adaptive and distributed physical-layer fingerprint using multi-node features and adjusts the fusion process to changing NTN environments. It combines causal analysis with meta-learning to improve robustness under distribution shifts and enable fast adaptation with few labeled samples. A two-stage authentication mechanism performs local authentication first and activates TDOA-based distributed verification only when needed to reduce overhead. Simulation results showed that SAFA-MZ achieves $92\%$ accuracy and $96\%$ AUC, outperforming baseline and single-feature methods.

\appendices

\section{Proofs}

\subsection{Backdoor identification of $\mu_{k,u}(t)$}
\label{app:back}

\begin{proposition}
	Under consistency, conditional exchangeability
	\(Y_{k,u}(t)\perp T\mid Z_{k,u}\), and positivity
	\(\mathbb{P}(T=t\mid Z_{k,u}=z)>0\), the interventional mean is identifiable as
	\[
	\mu_{k,u}(t)
	= \mathbb{E}[Y_{k,u}\mid \mathrm{do}(T=t)]
	= \mathbb{E}_{Z_{k,u}}\!\left[
	\mathbb{E}[Y_{k,u}\mid T=t,Z_{k,u}]
	\right].
	\]
\end{proposition}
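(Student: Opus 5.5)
The argument is the standard potential-outcomes derivation of the adjustment formula. I write $Y_{k,u}(t)$ for the potential outcome under the intervention $\mathrm{do}(T=t)$. By definition, $\mu_{k,u}(t)=\mathbb{E}[Y_{k,u}\mid \mathrm{do}(T=t)]=\mathbb{E}[Y_{k,u}(t)]$.

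The chain has four links:
\begin{align*}
\mathbb{E}[Y_{k,u}(t)]
&= \mathbb{E}_{Z_{k,u}}\!\left[\mathbb{E}[Y_{k,u}(t)\mid Z_{k,u}]\right]\\
&= \mathbb{E}_{Z_{k,u}}\!\left[\mathbb{E}[Y_{k,u}(t)\mid T=t, Z_{k,u}]\right]\\
&= \mathbb{E}_{Z_{k,u}}\!\left[\mathbb{E}[Y_{k,u}\mid T=t, Z_{k,u}]\right].
\end{align*}
\begin{enumerate}
\item Apply the tower property over the marginal law of $Z_{k,u}$, which gives the first line.
\item Invoke conditional exchangeability $Y_{k,u}(t)\perp T\mid Z_{k,u}$. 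This lets us insert the conditioning event $\{T=t\}$ inside the inner expectation without changing its value, which gives the second line.
\item Use consistency: on the event $\{T=t\}$ we have $Y_{k,u}(t)=Y_{k,u}$, so the potential outcome can be replaced by the observed one. This gives the third line.
\item Note that the right-hand side is a functional of the observational joint law of $(Z_{k,u},T,Y_{k,u})$ only. That is exactly the identifiability claim, and it coincides with \eqref{eq:backdoor}.
\end{enumerate}

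The one point needing care is the role of positivity, which I expect to be the only real obstacle. The inner conditional expectation $\mathbb{E}[Y_{k,u}\mid T=t,Z_{k,u}=z]$ must be well defined for $P_Z$-almost every $z$. This is where $\mathbb{P}(T=t\mid Z_{k,u}=z)>0$ is used.

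In the discrete (or finite) treatment case I would handle this explicitly. I would write $\mathbb{E}[Y\mid T=t,Z=z]=\mathbb{E}[Y\mathbf{1}\{T=t\}\mid Z=z]/\mathbb{P}(T=t\mid Z=z)$. Positivity makes the denominator nonzero, so the expression is unambiguous and does not depend on the choice of version of the conditional expectation. In the continuous-$Z$ case I would state the equalities $P_Z$-a.s. and integrate, which is legitimate because the exceptional null sets do not affect the outer expectation.

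As a remark, I would connect this to the graphical statement in the main text. Conditional exchangeability is implied by $Z_{k,u}$ satisfying the backdoor criterion relative to $(T,Y_{k,u})$, as assumed there, so the proposition justifies \eqref{eq:backdoor}. The same argument applies verbatim with $\widehat{Y}_{k,u}(n)$ in place of $Y_{k,u}$.
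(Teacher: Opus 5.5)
Your proposal is correct and follows the paper's proof essentially step for step. Both use the tower property over $Z_{k,u}$, conditional exchangeability to insert $T=t$, consistency to replace $Y_{k,u}(t)$ by $Y_{k,u}$, and positivity to ensure the inner conditional expectation is well defined; your handling of the positivity and version issues is simply more explicit than the paper's one-line remark.
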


\begin{proof}
	By iterated expectation and using conditional exchangeability, we have
	\begin{align}
		\mu_{k,u}(t) =& \mathbb{E}[Y_{k,u}(t)]
		= \mathbb{E}_{Z_{k,u}}\!\left[\mathbb{E}[Y_{k,u}(t)\mid Z_{k,u}]\right] \nonumber \\
		=& \mathbb{E}_{Z_{k,u}}\!\left[ \mathbb{E}[Y_{k,u}(t)\mid T=t, Z_{k,u}]\right]. \nonumber
	\end{align}
	By consistency, whenever $T=t$ we have $Y_{k,u}(t)=Y_{k,u}$, hence
	\(
	\mathbb{E}[Y_{k,u}(t)\mid T=t, Z_{k,u}]
	= \mathbb{E}[Y_{k,u}\mid T=t, Z_{k,u}].
	\)
	Combining the above equalities yields
	\[
	\mu_{k,u}(t)
	= \mathbb{E}_{Z_{k,u}}\!\left[\mathbb{E}[Y_{k,u}\mid T=t, Z_{k,u}]\right].
	\]
	Positivity guarantees that the conditional expectation on the right-hand side is well-defined on the support of $Z_{k,u}$.
\end{proof}

\subsection{Double robustness of the estimator in (48)}
\label{app:DR}

\begin{theorem}
	Define the population estimating function:
	\[
	\psi_{t}
	\triangleq
	m_t(Z_{k,u})
	+
	\frac{\mathbf{1}\{T=t\}}{e_t(Z_{k,u})}\Big(Y_{k,u}-m_t(Z_{k,u})\Big).
	\]
	Assume positivity and bounded moments of $Y_{k,u}$.
	Then
	\(\mathbb{E}[\psi_{t}] = \mu_{k,u}(t)\).
	Consequently, the sample estimator in (48) is consistent under standard regularity conditions (e.g., i.i.d. sampling and fixed or cross‑fitted nuisance estimators).
\end{theorem}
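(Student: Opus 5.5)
The plan is to split $\mathbb{E}[\psi_t]$ into an outcome-model part and a weighted-residual part, condition on $Z_{k,u}$, and show that the residual term absorbs any misspecification. The statement as written says $\mathbb{E}[\psi_t]=\mu_{k,u}(t)$ with $m_t$ and $e_t$ being the nuisance functions. I will prove the stronger double-robust form: $\mathbb{E}[\psi_t]=\mu_{k,u}(t)$ holds if either $e_t$ equals the true propensity $\mathbb{P}(T=t\mid Z_{k,u})$ or $m_t$ equals the true regression $\mathbb{E}[Y_{k,u}\mid T=t,Z_{k,u}]$. When both are the true functions defined in the paper, the claim follows as a special case.

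First I write $\mathbb{E}[\psi_t]=\mathbb{E}[m_t(Z)]+\mathbb{E}\big[\mathbb{E}[\tfrac{\mathbf{1}\{T=t\}}{e_t(Z)}(Y-m_t(Z))\mid Z]\big]$, abbreviating $Z=Z_{k,u}$. Positivity ensures $1/e_t$ is finite, and bounded moments of $Y$ make every expectation finite. Inside the conditional expectation, $e_t(Z)$ and $m_t(Z)$ are $Z$-measurable. The inner term therefore equals
\[
\frac{\pi_t(Z)}{e_t(Z)}\big(\bar m_t(Z)-m_t(Z)\big),
\]
where $\pi_t(z)=\mathbb{P}(T=t\mid Z=z)$ is the true propensity and $\bar m_t(z)=\mathbb{E}[Y\mid T=t,Z=z]$ is the true regression. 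This uses $\mathbb{E}[\mathbf{1}\{T=t\}Y\mid Z]=\pi_t(Z)\bar m_t(Z)$.

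Next I split into two cases.

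\emph{Outcome model correct} ($m_t=\bar m_t$): the correction term vanishes, so $\mathbb{E}[\psi_t]=\mathbb{E}[\bar m_t(Z)]$.

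\emph{Propensity correct} ($e_t=\pi_t$): the ratio $\pi_t/e_t$ equals $1$, the $m_t$ terms cancel, and again $\mathbb{E}[\psi_t]=\mathbb{E}[\bar m_t(Z)]$.

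In both cases, Proposition 1 (backdoor identification, using consistency and conditional exchangeability) gives $\mathbb{E}_Z[\bar m_t(Z)]=\mu_{k,u}(t)$, which proves the population claim.

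For consistency of the sample estimator (48), I treat two settings.

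\emph{Fixed nuisances:} $\psi_t^{(i)}$ are i.i.d. with finite mean, since $e_t\ge\epsilon>0$ under strong positivity and $Y$ has a finite first moment. The weak law of large numbers then gives $\widehat\mu^{\mathrm{DR}}(t)\to\mathbb{E}[\psi_t]=\mu_{k,u}(t)$ in probability.

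\emph{Estimated, cross-fitted nuisances $\hat e_t,\hat m_t$:} I decompose $\widehat\mu-\mu$ into three pieces:
\begin{itemize}
\item the oracle average minus $\mu$, which goes to zero by the law of large numbers;
\item an empirical-process term, which is $o_p(1)$ because of sample splitting together with $L_2$ convergence of the nuisances;
\item a remainder, which by the conditional calculation above equals $\mathbb{E}\big[\tfrac{(\pi_t-\hat e_t)}{\hat e_t}(\bar m_t-\hat m_t)\big]$.
\end{itemize}

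The main obstacle is the remainder term. Consistency only requires it to be $o_p(1)$, which needs $\hat e_t$ bounded away from zero and at least one nuisance converging in $L_2$ to the truth. I would bound it by Cauchy--Schwarz as $\epsilon^{-1}\|\pi_t-\hat e_t\|_2\|\bar m_t-\hat m_t\|_2$. Because the other nuisance's $L_2$ error is uniformly bounded under boundedness assumptions, the product goes to zero whenever either factor does, which is exactly the double-robust statement.
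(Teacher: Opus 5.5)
Your population-level argument is the paper's own: condition on $Z_{k,u}$, use $\mathbb{E}[\mathbf{1}\{T=t\}Y_{k,u}\mid Z_{k,u}]=\pi_t(Z_{k,u})\bar m_t(Z_{k,u})$, handle the outcome-correct and propensity-correct cases, and close with Proposition 1. Your single formula $m_t+\tfrac{\pi_t}{e_t}(\bar m_t-m_t)$ just merges the paper's two case computations; note that positivity of the true propensity does not make $1/e_t$ finite when $e_t$ is misspecified, so you, like the paper, implicitly need $e_t$ bounded away from zero. Your treatment of the sample estimator goes beyond the paper, which only asserts consistency, and it is sound, with one correction: with cross-fitting, the empirical-process term is $o_p(1)$ simply because $\|\psi_t(\hat e_t,\hat m_t)-\psi_t(\pi_t,\bar m_t)\|_2$ stays bounded (a conditional Chebyshev bound gives $O_p(n^{-1/2})$), so you should not attribute it to $L_2$ convergence of both nuisances, which would contradict the double-robust setting where one of them is misspecified.
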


\begin{proof}
	We prove the result in the two standard cases.
	
	\paragraph{Case 1: the outcome model is correct.}
	\begin{align*}
		&\mathbb{E}[\psi_{t}\mid Z_{k,u}=z] \\ \nonumber
		&=
		m_t(z)
		+
		\frac{1}{e_t(z)}
		\mathbb{E}\!\left[\mathbf{1}\{T=t\}\big(Y_{k,u}-m_t(z)\big)\mid Z_{k,u}=z\right] \\ \nonumber
		&=
		m_t(z)
		+
		\frac{1}{e_t(z)} \mathbb{P}(T=t\mid Z_{k,u}=z) \nonumber \\
		&\quad\mathbb{E}\left[Y_{k,u}-m_t(z)\mid T=t,Z_{k,u}=z\right].
	\end{align*}
	The last conditional expectation is zero, and therefore
	\[
	\mathbb{E}[\psi_{t}\mid Z_{k,u}=z]=m_t(z)=\mathbb{E}[Y_{k,u}\mid T=t,Z_{k,u}=z].
	\]
	Taking expectation over $Z_{k,u}$ gives
	\[
	\mathbb{E}[\psi_{t}]
	=
	\mathbb{E}_{Z_{k,u}}\!\left[\mathbb{E}[Y_{k,u}\mid T=t,Z_{k,u}]\right]
	=
	\mu_{k,u}(t),
	\]
	where the last equality follows from the proof above.
	
	\paragraph{Case 2: the propensity model is correct.}
	\begin{align*}
		&\mathbb{E}[\psi_{t}\mid Z_{k,u}=z]\\
		&=
		m_t(z)
		+
		\frac{1}{e_t(z)} \mathbb{E}\!\left[\mathbf{1}\{T=t\}Y_{k,u}\mid Z_{k,u}=z\right] \\
		&\quad- \frac{m_t(z)}{e_t(z)}
		\mathbb{E}\!\left[\mathbf{1}\{T=t\}\mid Z_{k,u}=z\right] \\
		&=
		m_t(z)
		+
		\frac{\mathbb{P}(T=t\mid Z_{k,u}=z)\,\mathbb{E}[Y_{k,u}\mid T=t,Z_{k,u}=z]}{e_t(z)} \\
		&\quad- \frac{m_t(z)}{e_t(z)}\mathbb{P}(T=t\mid Z_{k,u}=z) \\
		&=
		\mathbb{E}[Y_{k,u}\mid T=t,Z_{k,u}=z].
	\end{align*}
	Taking expectation over $Z_{k,u}$ again yields $\mathbb{E}[\psi_{t}] = \mu_{k,u}(t)$.
	
	Thus, the estimator is DR, remaining consistent if either the propensity or outcome model is correctly specified.
\end{proof}

\subsection{Correct mediation decomposition}
\label{app:med_dec}

\begin{proof}
	Add and subtract the same intermediate term $\mathbb{E}[Y_{k,u}(t_1,\mathbf{f}_{k,u}(t_0))]$ to prove the decomposition:
	\begin{align*}
		\mathrm{ACE}_{k,u}(t_1,t_0)
		&=
		\mathbb{E}\!\left[\widehat{Y}_{k,u}(t_1,\mathbf{f}_{k,u,t_1}) - \widehat{Y}_{k,u}(t_0,\mathbf{f}_{k,u}(t_0))\right] \\
		&=
		\mathbb{E}\!\left[\widehat{Y}_{k,u}(t_1,\mathbf{f}_{k,u,t_1}) - \widehat{Y}_{k,u}(t_1,\mathbf{f}_{k,u}(t_0))\right] \\
		&\quad+ \mathbb{E}\!\left[\widehat{Y}_{k,u}(t_1,\mathbf{f}_{k,u}(t_0)) - \widehat{Y}_{k,u}(t_0,\mathbf{f}_{k,u}(t_0))\right] \\
		&=
		\mathrm{NIE}_{k,u}(t_1,t_0)+\mathrm{NDE}_{k,u}(t_1,t_0).
	\end{align*}
\end{proof}

\subsection{Zero IRM penalty under an invariant representation}
\label{app:zero_IRM}

\begin{proposition}
	Assume there exists a representation $\Phi_{\phi^\star}$ such that for every triple $(e,k,u)$, $w=1$ is a stationary point of the risk
	\(R_{e,k,u}(w\circ \Phi_{\phi^\star})\).
	Then $\mathcal{R}_{\mathrm{IRM}}(\phi^\star)=0$.
\end{proposition}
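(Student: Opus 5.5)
The plan is to unfold the definition of the IRM penalty in \eqref{eq:irm} and show that each summand vanishes at $\phi=\phi^\star$. By \eqref{eq:irm}, $\mathcal{R}_{\mathrm{IRM}}(\phi^\star)=\sum_{e,k}\|\nabla_w R^{e,k}(w,\phi^\star)|_{w=1}\|^2$. This is a sum of nonnegative terms, so it is zero exactly when every task-level gradient at $w=1$ is zero. The task is therefore reduced to showing $\nabla_w R^{e,k}(w,\phi^\star)|_{w=1}=0$ for every $(e,k)\in\mathcal{I}$.

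First, I will relate the task risk $R^{e,k}(w,\phi)$ to the user-level risks $R_{e,k,u}(w\circ\Phi_\phi)$ named in the hypothesis. Each task dataset contains samples from multiple users, so the natural convention is to write the empirical task risk as a nonnegatively weighted sum (or average) over the users present in task $(e,k)$:
\[
R^{e,k}(w,\phi)=\sum_{u}\pi_{e,k,u}\,R_{e,k,u}(w\circ\Phi_\phi),
\]
where $\pi_{e,k,u}\ge 0$ are the empirical user proportions. I will state this decomposition explicitly as the reading of $R^{e,k}$. In the degenerate case where the hypothesis is meant directly at the task level, the step is trivial.

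Second, I will use linearity of differentiation in the scalar $w$. The weights $\pi_{e,k,u}$ do not depend on $w$, so
\[
\nabla_w R^{e,k}(w,\phi^\star)\big|_{w=1}=\sum_u \pi_{e,k,u}\,\nabla_w R_{e,k,u}(w\circ\Phi_{\phi^\star})\big|_{w=1}.
\]
By hypothesis, $w=1$ is a stationary point of every $R_{e,k,u}(w\circ\Phi_{\phi^\star})$, so each user-level gradient is zero and hence the task gradient is zero. This requires differentiability in $w$ at $w=1$, which is implicit in calling $w=1$ a stationary point and in the gradient appearing in \eqref{eq:irm}. Squaring and summing over $(e,k)$ then gives $\mathcal{R}_{\mathrm{IRM}}(\phi^\star)=0$.

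The only real obstacle is notational: the hypothesis is indexed by triples $(e,k,u)$, while the penalty is indexed by pairs $(e,k)$. The argument depends on the task risk being a $w$-independent linear aggregation of the per-user risks. I will state that assumption explicitly. If a nonlinear aggregation were used instead, the per-user stationarity would not suffice in general, so the linear form is necessary for the argument as given.
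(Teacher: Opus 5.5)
Your proposal is correct and takes the same route as the paper. Both substitute $\phi^\star$ into \eqref{eq:irm}, note that every gradient term vanishes at $w=1$, and conclude that the sum of squared norms is zero. You also add an explicit linear-aggregation step connecting the per-user hypothesis, indexed by $(e,k,u)$, to the per-task penalty, indexed by $(e,k)$; the paper's one-line proof skips this step, so your version is the more careful of the two.
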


\begin{proof}
	Substituting \(\phi^\star\) into (52), if \(w=1\) is stationary for every \((e,k,u)\), then each gradient term is zero. Hence, all summands vanish and \(\mathcal{R}_{\mathrm{IRM}}(\phi^\star)=0\).
\end{proof}

\bibliographystyle{IEEEtran}
\bibliography{refs_2_causal}

\vfill

\end{document}